\definecolor{myblue}{rgb}{0.3, 0.0, 0.85}
\definecolor{myviolet}{rgb}{0.5, 0.0, 0.5}
\newcommand\set{\mathrel{\overset{\makebox[0pt]{\mbox{\normalfont\tiny\sffamily \text{set}}}}{=}}}
\DeclareMathOperator*{\slim}{s-lim}
\DeclareMathOperator*{\wlim}{w-lim}
\theoremstyle{plain}
\newtheorem{thm}{Theorem}[section]
\newtheorem{lem}{Lemma}[section]
\newtheorem{cor}{Corollary}[section]
\newtheorem{defn}{Definition}[section]
\newtheorem*{rem}{Remark}
\newtheorem*{unthm}{Theorem}
\newtheorem*{unlem}{Lemma}
\newtheorem{snthm}{Theorem}
\newtheorem{snprop}{Proposition}
\title{Dynamics of Noncommutative Solitons I: Spectral Theory and Dispersive Estimates}
\author{August J. Krueger}
\address{Amado Building, Technion Math.\ Dept., Haifa, 32000, Israel}
\email{ajkrueger@tx.technion.ac.il}
\author{Avy Soffer}
\address{110 Frelinghuysen Rd., Rutgers Math.\ Dept., Piscataway, NJ 08854, USA}
\email{soffer@math.rutgers.edu}
\begin{document}

\maketitle


\begin{abstract}
We consider the Schr\"odinger equation with a Hamiltonian given by a second order difference operator with nonconstant growing coefficients, on the half one dimensional lattice. This operator appeared first naturally in the construction and dynamics of noncommutative solitons in the context of noncommutative field theory. We prove pointwise in time decay estimates with the decay rate $t^{-1}\log^{-2}t$, which is optimal with the chosen weights and appears to be so generally. We use a novel technique involving generating functions of orthogonal polynomials to achieve this estimate.
\end{abstract}

\section{Introduction and Background}

The notion of noncommutative soliton arises when one considers the nonlinear Klein-Gordon equation (NLKG) for a field which is dependent on, for example, two ``noncommutative coordinates'', $x,y$, whose coordinate functions satisfy canonical commutation relations (CCR) $[X,Y]=i\epsilon$. This follows through the method of deformation quantization, see e.g. \cite{BaFl1} for a review and \cite{BaFl2} for applications. By going to a representation of the above canonical commutation relation, one can reduce the dynamics of the problem to an equation for the coefficients of an expansion in the Hilbert space representation of the above CCR, see e.g. \cite{DJN 1}\cite{DJN 2}\cite{GMS}. By restricting to rotationally symmetric functions the nocommutative deformation of the Laplacian reduces to a second order finite difference operator, which is symmetric, and with variable coefficient growing like the lattice coordinate, at infinity. Therefore, this operator is unbounded, and in fact has continuous spectrum $[0,\infty)$. These preliminary analytical results, as well as additional numerical results, were obtained by Chen, Fr\"ohlich, and Walcher \cite{CFW}. The dynamics and scattering of the (perturbed) soliton can then be inferred from the NLKG with such a discrete operator as the linear part. We will be interested  in studying the dynamics of discrete NLKG and discrete NLS equations with these hamiltonians.


We will be working with a discrete Schr\"odinger operator $L_0$ which can be considered either a discretization or a noncommutative deformation of the radial 2D negative Laplacian, $-\Delta^{\mathrm{2D}}_\mathrm{r} = -r^{-1}\partial_r r \partial_r$. We will briefly review both perspectives.

In 1D one may find a discrete Laplacian via
\begin{align*}
& x \in \mathbb{R} \  \xrightarrow{\ \mathrm{discrete} \ } \  n \in \mathbb{Z}, \quad -\Delta^{\mathrm{1D}} = -\partial^{2}_x \  \xrightarrow{\ \mathrm{discrete} \ } \  -D_{+}D_{-},
\end{align*}
where $D_{+}v(n) = v(n+1) - v(n), D_{-}v(n) = v(n) - v(n-1)$ are respectively the forward and backward finite difference operators. It is important to implement this particular combination of these finite difference operators due in order to ensure that the resulting discrete Laplacian is symmetric. In 2D one may find a discrete Laplacian via
\begin{align*}
&r = (x^{2} + y^{2})^{1/2} =  2 \rho^{1/2},\quad \rho \in \mathbb{R}_+ \  \xrightarrow{\ \mathrm{discrete} \ } \  n \in \mathbb{Z}_+, \\
&-\Delta^{\mathrm{2D}}_{\mathrm{r}} = -r^{-1}\partial_{r}r\partial_{r} = -\partial_\rho \rho \partial_\rho \quad \xrightarrow{\ \mathrm{discrete} \ } \quad -D_{+}MD_{-} = L_{0},
\end{align*}
where $Mv(n) = nv(n)$. For any 1D  continuous coordinate $x$ one may discretize a pointwise multiplication straightforwardly via $v^p(x) \  \xrightarrow{\ \mathrm{discrete} \ } \  v^p(n)$, where $n$ is a discrete coordinate.

One may also follow the so-called noncommutative space perspective. Here one considers the formal ''Moyal star deformation'' of the algebra of functions on $\mathbb{R}^{2}$:
\begin{align*}
\Phi_{1}\cdot\Phi_{2}(x,y) &= \Phi_{1}(x,y)\Phi_{2}(x,y) \\
\xrightarrow{\ \epsilon > 0\ } \quad \Phi_{1} \star \Phi_{2}(x, y) &= \exp[i(\epsilon / 2)(\partial_{x_1}\partial_{y_2} - \partial_{y_1}\partial_{x_2})] \Phi_{1}(x_1,y_1)\\
&\qquad\times\Phi_{2}(x_2,y_2)\lfloor_{ (x_j,y_j) = (x,y) }.
\end{align*}
One calls the coordinates, $x,y$, noncommutative in this context because the coordinate functions $X(x,y) = x$, $Y(x,y) = y$ satisfy a nontrivial commutation relation $X\star Y - Y\star X \equiv [X,Y] = i\epsilon$. This prescription can be considered equivalent to the multiplication of functions of $q, p$ in quantum mechanics where operator ordering ambiguities are set by the normal ordering prescription for each product. For $\Phi$ a deformed function of $r = (x^2 + y^2)^{1/2}$ alone: $\Phi = \sum_{n=0}^\infty v(n)\Phi_n$ where $v(n) \in \mathbb{C}$ and the $\{\Phi_{n}\}_{n=0}^\infty$ are distinguished functions of $r$: the projectors onto the eigenfunctions of the noncommutative space variant of quantum simple harmonic oscillator system. One may find for $\Phi$ a function of $r$ alone:
\begin{align*}
-\Delta^{\mathrm{2D}} \Phi &= -\Delta^{\mathrm{2D}}_{\mathrm{r}}\Phi = -r^{-1}\partial_{r}r\partial_{r}\Phi \\
\xrightarrow{\ \epsilon > 0\ }\quad \frac{2}{\epsilon} L_0\Phi_n &= \frac{2}{\epsilon} \left\{ \begin{array}{cc}
		- (n+1)\Phi_{n+1} + (2n + 1)\Phi_{n} - n \Phi_{n-1} &,\quad n > 0 \\
		- \Phi_{1} + \Phi_{0} &,\quad n = 0 .
	\end{array} \right.
\end{align*}
which may be transferred to $\frac{2}{\epsilon} L_0v(n)$, an equivalent action on the $v(n)$, due to the symmetry of $L_0$. Since the $\Phi_n$ are noncommutative space representations of projection operators on a standard quantum mechanical Hilbert space, they diagonalize the Moyal star product: $\Phi_m\star\Phi_n = \delta_{m,n}\Phi_n$. This property is shared by all noncommutative space representations of projection operators. Thereby products of the $\Phi_n$ may be transferred to those of the expansion coefficients: $v(n)v(n) = v^2(n)$.

See  B. Durhuus, T. Jonsson, and R. Nest \cite{DJN 1,DJN 2} (2001) and T. Chen, J. Fr\"ohlich, and J. Walcher \cite{CFW} (2003) for reviews of the two approaches. In the following we will work on a lattice explicitly so $x \in \mathbb{Z}_+$ will be a discrete spatial coordinate.

The principle of replacing the usual space with a noncommutative space (or space-time) has found extensive use for model building in physics and in particular for allowing easier construction of localized solutions, see e.g. \cite{fuzzy physics}\cite{NC soliton survey} for surveys. An example of the usefulness of this approach is that it may provide a robust procedure for circumventing classical nonexistence theorems for solitons, e.g. that of Derrick \cite{Derrick}. The NLKG variant of the equation we study here first appeared in the context of string theory and associated effective actions in the presence of background D-brane configurations, see e.g. \cite{GMS}. We have decided to look in a completely different direction. The NLS variant and its solitons can in principle be materialized experimentally with optical devices, suitably etched, see e.g. \cite{Segev review}. Thus the dynamics of NLS with such solitons may offer new and potentially useful coherent states for optical devices. Furthermore, we believe the NLS solitons to have special properties, in particular asymptotic stability as opposed to the asymptotic metastability of the NLKG solitons conjectured in \cite{CFW}.


We will be following a procedure for the proof of asymptotic stability which has become standard within the study of nonlinear PDE \cite{Avy NLS}. Crucial aspects of the theory and associated results were established by Buslaev and Perelman \cite{important results 1}, Buslaev and Sulem \cite{important results 2}, and Gang and Sigal \cite{important results 3}. Important elements of these methods are the dispersive estimates. Various such estimates have been found in the context of 1D lattice systems, for example see the work of A.I. Komech, E.A. Kopylova, and M. Kunze \cite{important results 4} and of I. Egorova, E. Kopylova, G. Teschl \cite{1D lattice decay estimates}, as well as the continuum 2D problem to which our system bears many resemblances, see e.g. the work of E. A. Kopylova and A.I. Komech \cite{2D}. Extensive results have been found on the asymptotic stability on solitons of 1D nonlinear lattice Schr\"odinger equations by F. Palmero et al. \cite{important results 5}, P.G. Kevrekidis, D.E. Pelinovsky, and A. Stefanov \cite{important results 6}, as well as S. Cuccagna and M. Tarulli \cite{CucTar}. Typically the literature on 1D lattice NLS systems focuses on cases where the free linear Schr\"odinger operator is given by the negative of the standard 1D discrete Laplacian. Our work is on a different free linear Schr\"odinger operator, $L_0$ defined above, which has some distinguishing properties. Important aspects of the application of these models to optical nonlinear waveguide arrays has been established by H.S. Eisenberg et al. \cite{important results 7}.


This work is the first of a series of papers (this one, \cite{paper 02} and \cite{paper 03}) devoted to the construction, scattering, and asymptotic stability of radial noncommutative solitons with two noncommuting spatial coordinates. We have chosen to restrict our study to these solutions for a number of reasons: it builds upon the observations and results of \cite{CFW}; the radial cases allow one to work with effective 1D lattices and thereby standard Jacobi operators; for two noncommuting spatial coordinates the free radial system is equivalent to a known Jacobi operator spectral problem; the method proposed is by far the most illustrative for the given restrictions. The three papers are devoted to separate aspects of the problem in order of necessity. The organization of this work is as follows.

In this paper we focus on a key estimate that is needed for scattering and stability, namely the decay in time of solutions of relevant Schr\"odinger operators. Fortunately, for boundary perturbed operators, we find it is integrable, given by $t^{-1}\log^{-2}t$. The proof of this result is rather direct, and employs the generating functions of the corresponding generalized eigenfunctions, to explicitly represent and estimate the resolvent of the hamiltonian at all energies. We follow the general approach established by Jensen and Kato \cite{JenKat} and extended by Murata \cite{Murata} whereby time decay follows largely from the behavior of the resolvent near the threshold. From this one can see that for the chosen weights the estimate we find is optimal and should be optimal in general due to the elimination of the threshold resonance by boundary perturbations, by the generality of the method. We also conclude the absence of positive eigenvalues and singular continuous spectrum.

Previous results for the scattering theory of the associated noncommutative waves and solitons were found by Durhuus and Gayral \cite{noncommutative scattering}. In particular they find local decay estimates for the associated noncommutative NLS. They consider general noncommutative estimates for all for all even dimensions of pairwise noncommuting spaces. We consider radial solutions on 2D noncommutative space by alternative methods and find local decay for both the free Schr\"odinger operator as well as a class of rank one perturbations thereof. Our decay estimates are an improvement on those of \cite{noncommutative scattering} for this restricted class of solutions. An important element of this analysis is the study of the spectral properties of the free and boundary-perturbed Schr\"odinger operator. The boundary-perturbation is crucial to the work as it not only eliminates the threshold resonance of the free operator (thereby improving the time decay) as well as allows one to approximate and control solitons that are large only at the boundary via linear operators. We extend the linear analysis of Chen, Fr\"ohlich, and Walcher \cite{CFW} and reproduce some of their results with alternative techniques.

In \cite{paper 02} we address the construction and properties of a family of ground state solitons. These stationary states satisfy a nonlinear eigenvalue equation, are positive, monotonically decaying and sharply peaked for large spectral parameter. The proof of this result follows directly from our spectral results in this paper by iteration for small data and root finding for large data. The existence and many properties of solutions for a similar nonlinear eigenvalue equation were found by Durhuus, Jonssen, and Nest \cite{DJN 1}\cite{DJN 2}. We utilize a simple power law nonlinearity for which their existence proofs do not apply. We additionally find estimates for the peak height, spatial decay rate, norm bounds, and parameter dependence.

In \cite{paper 03} we focus on deriving a decay rate estimate for the Hamiltonian which results from linearizing the original NLS around the soliton constructed in \cite{paper 02}. We determine the full spectrum of this operator, which is the union of a multiplicity 2 null eigenvalue and a real absolutely continuous spectrum. This establishes a well-defined set of modulation equations \cite{Avy NLS} and points toward the asymptotic stability of the soliton.

In the conclusion of \cite{paper 03} we describe how the results can be applied to prove stability of the soliton we constructed in \cite{paper 02}. The issue of asymptotic stability of NLS solitons has been sufficiently well-studied in such a broad context that the proof thereof is often considered as following straightforwardly from the appropriate spectral and decay estimates, of the kind found in \cite{paper 03}. We sketch how the theory of modulation equations established by Soffer and Weinstein \cite{Avy NLS} can be used to prove asymptotic stability. Chen, Fr\"ohlich, and Walcher \cite{CFW} conjectured that in the NLKG case the corresponding solitons are unstable but with exponential long decay: the so-called metastability property, see \cite{Avy NLKG} . There is a great deal of evidence to suggest that this is in fact the case but a proof has yet to be provided. This will be the subject of future work.

\section{Notation}

Let $\mathbb{Z}_+$ and $\mathbb{R}_+$ respectively be the nonnegative integers and nonnegative reals and $\mathscr{H} = \ell^2(\mathbb{Z}_+,\mathbb{C})$ the Hilbert space of square integrable complex functions, e.g. $v: \mathbb{Z}_+ \ni x \mapsto v(x) \in \mathbb{C}$, on the 1D half-lattice with inner product $( \cdot , \cdot )$, which is conjugate-linear in the first argument and linear in the second argument, and the associated norm $||\cdot||$, where $||v|| = (v,v)^{1/2}$, $\forall v\in\mathscr{H}$. Where the distinction is clear from context $||\cdot|| \equiv ||\cdot||_{\mathrm{op}}$ will also represent the norm for operators on $\mathscr{H}$ given by $||A||_{\mathrm{op}} = \sup_{v \in \mathscr{H}}||v||^{-1}||Av||$, for all bounded $A$ on $\mathscr{H}$. Denote the lattice $\ell^1$ norm by $||\cdot||_1$ where $||v||_1 = \sum_{x=0}^\infty|v(x)|$, $\forall v \in \ell^1(\mathbb{Z}_+,\mathbb{C})$.

We denote by $\otimes$ the tensor product and by $z \mapsto \overline{z}$ complex conjugation for all $z \in \mathbb{C}$. We write $\mathscr{H}^*$ for the space of linear functionals on $\mathscr{H}$: the dual space of $\mathscr{H}$. For every $v \in \mathscr{H}$ one has that $v^* \in \mathscr{H}^*$ is its dual satisfying $v^{*}(w) = (v,w)$ for all $v,w \in \mathscr{H}$.  For every operator $A$ on $\mathscr{H}$ we take $\mathcal{D}(A)$ as standing for the domain of $A$. For each operator $A$ on $\mathscr{H}$ define $A^*$ on $\mathscr{H}^*$ to be its dual and $A^\dag$ on $\mathscr{H}$ its adjoint such that $v^*(Aw) = A^*v^*(w) = (A^\dag v, w)$ for all $v \in \mathcal{D}(A^\dag)$ and all $w \in \mathcal{D}(A)$. Let $\{\chi_{x}\}_{x=0}^{\infty}$ be the orthonormal set of vectors such that $\chi_{x}(x) = 1$ and $\chi_{x_{1}}(x_{2}) = 0$ for all $x_{2} \ne x_{1}$. We write $P_{x} = \chi_{x} \otimes \chi^{*}_{x}$ for the orthogonal projection onto the space spanned by $\chi_{x}$.

We define $\mathscr{T}$ to be the topological vector space of all complex sequences on $\mathbb{Z}_+$ endowed with topology of pointwise convergence, $\mathcal{B}(\mathscr{H})$ to be the space of bounded linear operators on $\mathscr{H}$, and $\mathcal{L}(\mathscr{T})$ to be the space of linear operators on $\mathscr{T}$, endowed with the pointwise topology induced by that of $\mathscr{T}$. When an operator $A$ on $\mathscr{H}$ can be given by an explicit formula through $A(x_{1},x_{2}) = (\chi_{x_{1}},A\chi_{x_{2}}) < \infty$ for all $x_{1},x_{2} \in \mathbb{Z}_{+}$ one may make the natural inclusion of $A$ into $\mathcal{L}(\mathscr{T})$, the image of which will also be denoted by $A$. We consider $\mathscr{T}$ to be endowed with pointwise multiplication, i.e. the product $uv$ is specified by $(uv)(x)=u(x)v(x)$ for all $u,v \in \mathscr{T}$.

We represent the \emph{spectrum} of each $A$ on $\mathscr{H}$ by $\sigma(A)$. We term each element $\lambda \in \sigma(A)$ a \emph{spectral value}. We write $\sigma_{\mathrm{d}}(A)$ for the \emph{discrete spectrum}, $\sigma_{\mathrm{e}}(A)$ for the \emph{essential spectrum}, $\sigma_{\mathrm{p}}(A)$ for the \emph{point spectrum}, $\sigma_{\mathrm{ac}}(A)$ for the \emph{absolutely continuous spectrum}, and $\sigma_{\mathrm{sc}}(A)$ for the \emph{singularly continuous spectrum}. Should an operator $A$ satisfy the spectral theorem there exist scalar measures $\{\mu_{k}\}_{k=1}^{n}$ on $\sigma(A)$ which furnish the associated spectral representation of $\mathscr{H}$ for $A$ such that the action of $A$ is given by multiplication by $\lambda \in \sigma(A)$ on $\oplus_{k=1}^{n}L^{2}(\sigma(A),\mathrm{d}\mu_{k})$. If $\mathscr{H} = \oplus_{k=1}^{n}L^{2}(\sigma(A),\mathrm{d}\mu_{k})$ we term $n$ the \emph{generalized multiplicity} of $A$. For an operator of arbitrary generalized multiplicity we will write $\mu^{A}$ for the associated operator valued measure, such that $A = \int_{\sigma(A)} \lambda\ \mathrm{d}\mu^{A}_{\lambda}$. For each operator $A$ that satisfies the spectral theorem, its spectral (Riesz) projections will be written as $P^{A}_{\mathrm{d}}$ and the like for each of the distinguished subsets of the spectral decomposition of $A$. Define $R^A_\cdot: \rho(A) \to \mathcal{B}(\mathscr{H})$, the resolvent of $A$, to be specified by $R^A_z := (A - z)^{-1}$, where $\rho(A) := \mathbb{C} \setminus \sigma(A)$ is the resolvent set of $A$ and where by abuse of notation $zI \equiv z \in \mathcal{B}(\mathscr{H})$ here.

Allow an \emph{eigenvector} of $A$ to be a vector $v \in \mathscr{H}$ for which $Av = \lambda v$ for some $\lambda \in \mathbb{C}$. Should $A$ admit inclusion into $\mathcal{L}(\mathscr{T})$, we define a \emph{generalized eigenvector} of $A$ be a vector $\phi \in \mathscr{T} \setminus \mathscr{H}$ which satisfies $A\phi = \lambda \phi$ for some $\lambda \in \mathbb{C}$ such that $\phi(x)$ is polynomially bounded, which is to say that there exists a $p \ge 0$ such that $\lim_{x \nearrow \infty}(x+1)^{-p}\phi(x) = 0$. We define a \emph{spectral vector} of $A$ to be a vector which is either an eigenvector or generalized eigenvector of $A$. We define the subspace of spectral vectors associated to the set $\Sigma \subseteq \sigma(A)$ to be the \emph{spectral space over $\Sigma$}.

We write $\partial_z \equiv \frac{\partial}{\partial z}$ and $\mathrm{d}_z \equiv \frac{\mathrm{d}}{\mathrm{d} z}$ respectively for formal partial and total derivative operators with respect to a parameter $z \in \mathbb{R}, \mathbb{C}$.

\section{Results}

\begin{defn}
Define $L_0$ to be the operator on $\mathscr{H}$ with action
\begin{align}
	L_0v(x) = \left\{
	\begin{array}{cc}
		- (x+1)v(x+1) + (2x + 1)v(x) - x v(x-1) &,\quad x > 0 \\
		- v(1) + v(0) &,\quad x = 0 .
	\end{array} \right.
\end{align}
and domain $\mathcal{D}(L_0) := \{ v \in \mathscr{H}\ |\ || Mv || < \infty \}$, where $M$ is the multiplication operator with action $Mv(x) = xv(x)$ $\forall v \in \mathscr{T}$.
\end{defn}

Consider the linear Schr\"odinger equation
\begin{align}\label{SE}
	i\partial_tu = L_0u +Vu
\end{align}
where $u : \mathbb{R}_t \times \mathbb{Z}_+ \to \mathbb{C}$ and $V$ is a potential (energy) multiplication operator on $\mathscr{H}$. To find solutions to Equation \eqref{SE} it is sufficient to analyze the spectral measure of $L_{0} + V$. The regularity and boundedness properties of $V$ are crucial for the analysis of solutions and for general results such must be given in advance. In our work any potential introduced will be given explicitly so all relevant properties will be given by its representation and domain of definition.

\begin{snprop}\label{snprop01}
The operator $L_0$ has the following properties.
\begin{enumerate}
	\item $L_0$ is essentially self-adjoint.
	\item $L_{0}$ has generalized multiplicity 1.
	\item The spectrum of $L_0$ is absolutely continuous, $\sigma(L_0) = \sigma_{\mathrm{ac}}(L_0) = [0,\infty)$, and its generalized eigenfunctions are the Laguerre polynomials $\phi_{\lambda}(x) \equiv \phi^{L_0}_{\lambda}(x) = \sum_{k=0}^x \frac{(-\lambda)^k}{k!}\binom{x}{k}$ for choice of normalization $\phi_{\lambda}(0) = 1$.
\end{enumerate}
\end{snprop}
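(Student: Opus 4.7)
The plan is to recognize $L_0$ as the Jacobi operator whose orthogonal polynomials are the classical Laguerre polynomials, and to read off all three assertions from the associated spectral transform.

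First I would verify the eigenfunction claim by direct computation. The functions $\phi_\lambda(x)=\sum_{k=0}^{x}\binom{x}{k}(-\lambda)^k/k!$ are the classical Laguerre polynomials $L_x(\lambda)$ of degree $x$, which satisfy the standard three-term recurrence
\[
(x+1)L_{x+1}(\lambda)=(2x+1-\lambda)L_x(\lambda)-xL_{x-1}(\lambda),\qquad x\ge 1.
\]
Rearranging this identity gives exactly $(L_0\phi_\lambda)(x)=\lambda\phi_\lambda(x)$ for $x\ge 1$. At $x=0$ the boundary formula reduces to $-\phi_\lambda(1)+\phi_\lambda(0)=-(1-\lambda)+1=\lambda=\lambda\phi_\lambda(0)$, consistent with the vanishing of the coefficient $x$ of $v(x-1)$ in the bulk expression. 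Since $\phi_\lambda$ is polynomial in $x$ of degree $x$, it is polynomially bounded for each fixed $\lambda$ and therefore qualifies as a generalized eigenvector in the sense defined in the notation section.

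For essential self-adjointness I would invoke Carleman's criterion for Jacobi operators. In Jacobi form, $L_0$ has off-diagonal weights $a_x=-(x+1)$ and diagonal weights $b_x=2x+1$, and $\sum_x|a_x|^{-1}=\sum_x(x+1)^{-1}=\infty$, so the Jacobi matrix is limit-point at infinity. Hence the restriction of $L_0$ to finitely supported sequences is essentially self-adjoint, and its closure agrees with the realization on the stated maximal domain $\{v\in\mathscr{H}:Mv\in\mathscr{H}\}$, on which $L_0v\in\mathscr{H}$ and finitely supported vectors are dense.

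For parts (2)--(3) I would introduce the generalized Fourier transform
\[
U:\mathscr{H}\to L^2\bigl([0,\infty),e^{-\lambda}\,\mathrm{d}\lambda\bigr),\qquad (Uv)(\lambda)=\sum_{x=0}^{\infty}v(x)\phi_\lambda(x),
\]
initially on finitely supported $v$. The classical orthogonality $\int_0^\infty L_m(\lambda)L_n(\lambda)e^{-\lambda}\,\mathrm{d}\lambda=\delta_{mn}$ makes $U$ an isometry, and the density of polynomials in this weighted $L^2$ space (a standard consequence of the moment problem for $e^{-\lambda}\,\mathrm{d}\lambda$, which is determinate) extends it to a unitary. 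The eigenfunction identity then yields $U\overline{L_0}\,U^{-1}=\lambda\cdot$, so $L_0$ is unitarily equivalent to multiplication by $\lambda$ on a single scalar-measure $L^2$ space. This directly delivers generalized multiplicity $1$ with cyclic vector $\chi_0$ (whose image is the constant $1$); since $e^{-\lambda}\,\mathrm{d}\lambda$ is absolutely continuous with respect to Lebesgue measure with density nowhere vanishing on $[0,\infty)$, we obtain $\sigma(L_0)=\sigma_{\mathrm{ac}}(L_0)=[0,\infty)$ with no point or singular continuous part.

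The main obstacle is more bookkeeping than analysis: all three ingredients---Carleman's criterion, the Laguerre recurrence, and the orthonormality/completeness of Laguerre polynomials in $L^2([0,\infty),e^{-\lambda}\,\mathrm{d}\lambda)$---are classical. The only nontrivial point of care is to confirm that the boundary adjustment at $x=0$ does not deform the standard Jacobi picture, which is immediate from the observation that the bulk formula already absorbs the correct boundary behavior through the vanishing of $x$ at the origin.
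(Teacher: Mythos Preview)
Your proposal is correct, but it proceeds along a genuinely different route from the paper on two of the three parts.

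For part (1), the paper does not use Carleman's limit-point criterion. Instead it proves directly that every finitely supported vector $v$ is \emph{semi-analytic} for $L_0$, i.e.\ $\|L_0^k v\|\le c_v(2k)!$, by tracking how the support of $L_0^j v$ spreads one site per application, and then invokes the Semi-Analytic Vector Theorem (reviewed in an appendix). Your Carleman argument is shorter and more standard within the Jacobi-operator framework; the paper's approach is more hands-on and self-contained but requires a page of estimates plus an abstract theorem. Both yield the same conclusion.

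For parts (2)--(3), the paper argues multiplicity~1 separately and elementarily: the recurrence $L_0 v(x)=\lambda v(x)$ determines $v$ uniquely from $v(0)$ by forward induction in $x$, so there is at most one solution per $\lambda$ up to normalization. It then identifies this unique solution with the Laguerre polynomials by recognizing the recurrence, and cites the known orthogonality/completeness relations $\int_0^\infty e^{-\lambda}\phi_\lambda(x_1)\phi_\lambda(x_2)\,\mathrm{d}\lambda=\delta_{x_1,x_2}$ to read off the spectral measure. You instead package everything into the single unitary $U:\mathscr{H}\to L^2([0,\infty),e^{-\lambda}\mathrm{d}\lambda)$ and obtain multiplicity~1, the cyclic vector $\chi_0$, and $\sigma=\sigma_{\mathrm{ac}}=[0,\infty)$ simultaneously from the properties of the target measure. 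Your route is more unified and makes the absolute continuity explicit; the paper's route isolates the purely algebraic reason for simple multiplicity but is terser on the spectral-type conclusion.

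One small slip of phrasing: you write that $\phi_\lambda$ is ``polynomial in $x$ of degree $x$''; you mean polynomial in $\lambda$ of degree $x$. The polynomial-boundedness in $x$ that the paper's definition of generalized eigenvector requires follows from the standard large-$x$ asymptotics of $L_x(\lambda)$ for $\lambda\ge 0$, not from any polynomial structure in $x$.
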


\noindent Chen, Fr\"ohlich, and Walcher determined the above properties for $L_{0}$ in \cite{CFW} via methods which are different from ours. We refer the reader to the presentation of the spectral projection $\delta^{L_0}_\lambda$ in Definition 5.3.

\begin{defn}
Let $w_{\lambda} \equiv w^{L_0}_{\lambda} := (\chi_{0},\delta^{L_{0}}_{\lambda} \chi_{0})$ be termed the \emph{spectral integral weight}, $\psi_z \equiv \psi^{L_0}_z := R^{L_{0}}_z\chi_{0}$ the \emph{resolvent vector}, $\xi_z \equiv \xi^{L_0}_z := \psi_z - \psi_z(0)\phi_z$ the \emph{auxilliary resolvent vector}, and $f_{z} \equiv f^{L_0}_z := (\chi_{0},R^{L_{0}}_{z}\chi_{0})$ the \emph{resolvent function of $L_{0}$} for all $\lambda \in \sigma(L_{0})$ for all $z \in \rho(L_{0})$.
\end{defn}

Since $\phi_{\lambda}(x)$ is a polynomial of degree $x$ in $\lambda$, one has that the analytic continuation $\phi_{z} \in \mathscr{T}$, $z \in \mathbb{C}$, exists. The above permits the useful representation $\psi_{z} = f_{z}\phi_{z}+\xi_{z}$. $\phi_{z}$ and $\psi_{z}$ are connected to the Stieltjes transform theory of orthogonal polynomials. In that context $\phi_{z}$ is termed a \emph{primary polynomial} and $\psi_{z}$ the corresponding \emph{secondary polynomial} \cite{moment}.

\begin{defn}
Define $L$ to be the operator on $\mathscr{H}$ with domain $\mathcal{D}(L) = \mathcal{D}(L_0)$ and specified by $L := L_0 - q P_0$ where $q > 0$ is a fixed constant and $P_0 := \chi_0 \otimes \chi_0^*$. Let $\psi^{L}_{z} := R^{L}_{z}\chi_{0}$ be the resolvent vector and $f^{L}_{z} = (\chi_{0},R^{L}_{z}\chi_{0})$ the resolvent function of $L$ for all $z \in \rho(L)$.
\end{defn}

The addition of the attractive boundary potential enhances the time decay. It also allows one to approximate and control solutions with data that is only large at the boundary in applications to nonlinear problems.

\begin{snthm}\label{snthm01}
Let $\phi^L_\lambda$, $\lambda \in \sigma(L)$, denote spectral vectors of $L$ chosen to satisfy the normalization condition $(\chi_{0},\phi^{L}_{\lambda}) = \phi^L_\lambda(0) = 1$, $\forall \lambda \in \sigma(L)$. $L$ has the following properties.
	\begin{enumerate}
		\item $\sigma_{\mathrm{d}}(L) = \sigma_{\mathrm{p}}(L) = \{ \lambda_0 \}$, where $\lambda_0 < 0$ uniquely satisfies $1 = q \psi_{\lambda_0}(0)$ and the unique eigenfunction over $\lambda_0$ is $\psi^L_{\lambda_0} = q\psi_{\lambda_0}$.
		\item $\sigma_{\mathrm{e}}(L) = \sigma_{\mathrm{ac}}(L) = \sigma(L_{0}) = [0,\infty)$ and has generalized multiplicity 1.
		\item $\mathrm{d}\mu^L(\lambda) = w^L_{\lambda}\phi^L_{\lambda} \otimes \phi^{L,*}_{\lambda} \ \mathrm{d}\lambda$, where $w^L_{\lambda} = \{ [1+ q e^{-\lambda} \mathrm{Ei}(\lambda)]^2 + [\pi q e^{-\lambda}]^2 \}^{-1} e^{-\lambda}$, $\mathrm{d}\lambda$ is the Lebesgue measure on $[0,\infty)$, and $\mathrm{Ei}(\lambda) := \int_{-\lambda}^\infty \!\mathrm{d}u\ u^{-1}e^{-u}$, $\lambda>0$, is the exponential integral. The generalized eigenfunctions of $L$ are given by $\phi^L_\lambda = \phi_{\lambda} + q \xi_{\lambda}$, $\lambda \in \sigma_{\mathrm{ac}}(L)$.
	\end{enumerate}
\end{snthm}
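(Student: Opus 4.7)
The plan is to treat $L = L_0 - qP_0$ as a rank-one perturbation of $L_0$ and base everything on the Sherman--Morrison-type identity
\begin{equation*}
R^L_z = R^{L_0}_z + \frac{q\,R^{L_0}_z\,P_0\,R^{L_0}_z}{1 - q f_z}, \qquad z \in \rho(L_0),\ 1 - qf_z \neq 0.
\end{equation*}
Contracting with $\chi_0$ yields the two identities $\psi^L_z = \psi_z/(1-qf_z)$ and $f^L_z = f_z/(1-qf_z)$, which drive each of the three conclusions.

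For part (1), the Stieltjes representation $f_z = \int_0^\infty w_\mu\,(\mu-z)^{-1}\,\mathrm{d}\mu$ inherited from Proposition 3.1 (with $w_\mu = e^{-\mu}$) shows that $f_z$ is real and strictly increasing on $(-\infty,0)$, with $f_z \to 0^+$ as $z \to -\infty$ and $f_z \to +\infty$ as $z \to 0^-$ because $\int_0^\infty e^{-\mu}\mu^{-1}\,\mathrm{d}\mu = +\infty$. Consequently $qf_z = 1$ has a unique real solution $\lambda_0 < 0$, and the direct computation
\begin{equation*}
(L - \lambda_0)\psi_{\lambda_0} = (L_0 - \lambda_0)\psi_{\lambda_0} - q\psi_{\lambda_0}(0)\chi_0 = (1 - qf_{\lambda_0})\chi_0 = 0
\end{equation*}
exhibits the eigenfunction. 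Since $\lambda_0 \in \rho(L_0)$, $\psi_{\lambda_0} \in \mathcal{D}(L_0) = \mathcal{D}(L) \subset \mathscr{H}$. Any other eigenvalue in $\rho(L_0)$ would be a further zero of $1 - qf_z$, which is excluded by monotonicity; eigenvalues embedded in $[0,\infty)$ are ruled out together with singular continuous spectrum by part (3). Part (2) then follows from Weyl's theorem applied to the finite-rank symmetric (hence compact) perturbation $qP_0$, and generalized multiplicity $1$ is inherited because the formula $\psi^L_z = \psi_z/(1-qf_z)$ shows that the orbit $\{R^L_z\chi_0\}$ spans the same subspace as $\{\psi_z\}$, which is already dense by Proposition 3.1.

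For part (3) I would apply Stone's formula to the scalar spectral measure associated to $\chi_0$, reducing $w^L_\lambda$ to the boundary values of $f^L_z$ on $(0,\infty)$. Sokhotski--Plemelj applied to the Stieltjes integral yields $f_{\lambda+i0} = e^{-\lambda}\mathrm{Ei}(\lambda) + i\pi e^{-\lambda}$, and substituting into $f^L = f/(1-qf)$ and rationalising isolates $\mathrm{Im}\,f^L_{\lambda+i0}$ as $\pi e^{-\lambda}$ over $[1 \pm q e^{-\lambda}\mathrm{Ei}(\lambda)]^2 + [\pi q e^{-\lambda}]^2$, matching the stated weight up to the sign convention for $\mathrm{Ei}$. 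Continuity and strict positivity of $w^L_\lambda$ on $(0,\infty)$ preclude embedded point spectrum and any singular continuous component, confirming $\sigma_{\mathrm{ac}}(L) = [0,\infty)$. To identify $\phi^L_\lambda = \phi_\lambda + q\xi_\lambda$ I check the pointwise identity $L\phi^L_\lambda = \lambda\phi^L_\lambda$ via the three-term recurrences: for $x > 0$ it reduces to $L_0\phi_\lambda = \lambda\phi_\lambda$ together with $(L_0 - \lambda)\xi_\lambda = \chi_0$, while at $x = 0$ the contribution of $-qP_0$ exactly cancels the inhomogeneous term $q\chi_0$ produced by $\xi_\lambda$; normalization follows from $\phi_\lambda(0) = 1$ and $\xi_\lambda(0) = \psi_\lambda(0) - \psi_\lambda(0)\cdot 1 = 0$.

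The main obstacle is the threshold analysis at $\lambda = 0$. Precise asymptotics $\mathrm{Ei}(\lambda) \sim -\gamma - \ln\lambda$ as $\lambda \to 0^+$ are needed both to show that the boundary potential moves the free threshold resonance off the spectrum (so that the denominator $|1 - qf_{\lambda+i0}|^2$ stays bounded away from zero near $0$) and to rule out a zero-energy eigenfunction in $\mathscr{H}$, which would otherwise pin an embedded eigenvalue at the endpoint of $\sigma_{\mathrm{ac}}(L)$. Once these boundary-value estimates for $f$ are in hand, the remainder of the proof reduces to algebraic manipulation of the rank-one resolvent identity and routine appeals to Weyl's theorem, Stone's formula, and Sokhotski--Plemelj.
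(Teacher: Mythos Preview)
Your proposal is correct and follows essentially the same route as the paper: both proofs rest on the rank-one resolvent identity $R^L_z = R^{L_0}_z + q(1-qf_z)^{-1}R^{L_0}_z P_0 R^{L_0}_z$, locate the unique negative eigenvalue via monotonicity of $f_z$ on $(-\infty,0)$, invoke Weyl for the essential spectrum, and read off $w^L_\lambda$ and $\phi^L_\lambda$ from the boundary values of $f^L_z = f_z/(1-qf_z)$. The only cosmetic differences are that the paper works with the explicit representation $f_z = e^{-z}E_1(-z)$ rather than the abstract Stieltjes integral when proving monotonicity, and obtains $\phi^L_\lambda = \phi_\lambda + q\xi_\lambda$ by computing $\delta\psi^L_\lambda$ rather than by your direct verification of the three-term recurrence; neither changes the substance.
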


We are ultimately interested in studying the solutions of a nonlinear equation so it is important to acquire decay estimates for dispersive ``scattering states''.

\begin{defn}
Let $W_{\kappa,\tau}$ be the multiplication operator weight specified by \\ $W_{\kappa,\tau}v(x) = (x + \kappa)^{\tau}v(x)$, $\forall v \in \mathscr{T}$, where $0 < \kappa \in \mathbb{R}$, $ \tau \in \mathbb{R}$.
\end{defn}

\begin{snthm}\label{snthm02}
For all $-3 \ge \tau \in \mathbb{R}$, $t > 0$, $v \in \ell^{1}$, there exists a constant $c > 0$ and a $1 < \kappa \in \mathbb{R}$ such that
\begin{align}
	|| W_{\kappa,\tau}e^{-itL_{0}}W_{\kappa,\tau}v ||_{\infty} < ct^{-1}|| v ||_{1}
\end{align}
\end{snthm}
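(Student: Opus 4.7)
The plan is to derive a closed-form expression for the propagation kernel using the Laguerre generating function, then bound it by case analysis on the size of $\sqrt{xy}/t$. First, by Proposition~\ref{snprop01} combined with the classical Laguerre orthogonality $\int_0^\infty L_m(\lambda) L_n(\lambda) e^{-\lambda}\,d\lambda = \delta_{mn}$, the spectral weight of $L_0$ is $w_\lambda = e^{-\lambda}$ and the Schwartz kernel of $e^{-itL_0}$ is
\[
K_t(x,y) := (\chi_x, e^{-itL_0}\chi_y) = \int_0^\infty e^{-(1+it)\lambda} L_x(\lambda) L_y(\lambda)\,d\lambda.
\]
Since $\|W_{\kappa,\tau} A W_{\kappa,\tau}\|_{\ell^1 \to \ell^\infty} = \sup_{x,y}(x+\kappa)^\tau(y+\kappa)^\tau|A(x,y)|$ for any integral operator $A$, and the claim is monotone in $|\tau|$, the proof reduces to showing the pointwise estimate $(x+\kappa)^{-3}(y+\kappa)^{-3}|K_t(x,y)| \le c/t$ at $\tau = -3$.

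The central step will be to apply the Laguerre generating function $\sum_x z^x L_x(\lambda) = (1-z)^{-1} e^{-\lambda z/(1-z)}$ (valid for $|z|<1$) in both indices, interchange sum and integral, and evaluate the resulting Laplace-type $\lambda$-integration, obtaining
\[
\sum_{x,y\ge 0} z^x w^y K_t(x,y) = \frac{1}{1 - zw + it(1-z)(1-w)}.
\]
Binomial expansion together with a Chu--Vandermonde-type identity, specifically $\sum_j \binom{x}{j}\binom{x+y-j}{y-j} z^j = \sum_k \binom{x}{k}\binom{y}{k}(1+z)^k$, will then condense the coefficient extraction into the explicit closed form
\[
K_t(x,y) = \frac{(it)^{x+y}}{(1+it)^{x+y+1}}\, {}_2F_1\!\left(-x,-y;1;-\tfrac{1}{t^2}\right),
\]
where the hypergeometric factor is the finite Meixner-type polynomial $\sum_{k=0}^{\min(x,y)} \binom{x}{k}\binom{y}{k}(-t^{-2})^k$.

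Given this closed form, I would finish via elementary estimates. The prefactor modulus equals $(1+t^2)^{-1/2}\gamma^{(x+y)/2}$ with $\gamma := t^2/(1+t^2) \in (0,1)$, hence is bounded by $t^{-1}$ for $t\ge 1$. The termwise bound $\binom{x}{k}\binom{y}{k} \le (xy)^k/(k!)^2$ gives $|{}_2F_1(-x,-y;1;-t^{-2})| \le I_0(2\sqrt{xy}/t)$ with $I_0$ the modified Bessel function. A dichotomy on $\sqrt{xy}/t$ closes the argument: when $\sqrt{xy} \le t$ one has $I_0(2\sqrt{xy}/t) \le I_0(2)$ and the bound $|K_t(x,y)| \le Ct^{-1}$ follows immediately, while the weight is controlled by $\kappa^{-6}$; when $\sqrt{xy} > t$ one invokes unitarity $|K_t(x,y)| \le 1$ combined with $(x+\kappa)^{-3}(y+\kappa)^{-3} \le (xy)^{-3} < t^{-6} \le t^{-1}$ (valid since $x,y\ge 1$ and $t\ge 1$ in this regime). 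The regime $t \le 1$ is trivial from unitarity upon taking $c \ge \kappa^{-6}$. The main obstacle is uncovering the closed-form representation above, which requires combining the Laguerre generating function with the Chu--Vandermonde identity and recognizing the emergent hypergeometric structure; once obtained, the dispersive estimate reduces to the dichotomous bound just sketched.
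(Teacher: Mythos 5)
Your argument is correct, and it takes a genuinely different route from the paper. The paper never computes the kernel in closed form: it writes $e^{-itL_{0}}(x_{1},x_{2})=\int_{0}^{\infty}\mathrm{d}\lambda\,e^{-it\lambda}w_{\lambda}\phi_{\lambda}(x_{1})\phi_{\lambda}(x_{2})$, integrates by parts once in $\lambda$, and controls $w_{\lambda}^{1/2}\phi_{\lambda}(x)$ and its first $\lambda$-derivative by $C(x+\kappa)^{n+1}\exp[-\lambda/(4(x+\kappa))]$ using the Laguerre generating function evaluated on an $x$-dependent contour of radius $1-(x+\kappa)^{-1}$ (Corollary \ref{cor01}); the weights $(x+\kappa)^{-3}$ then absorb the polynomial losses and the $\lambda$-integral of the surviving exponential. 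You instead evaluate the bilinear generating function $\sum_{x,y}z^{x}w^{y}K_{t}(x,y)=[1-zw+it(1-z)(1-w)]^{-1}$ and extract the classical closed form $K_{t}(x,y)=\frac{(it)^{x+y}}{(1+it)^{x+y+1}}\,{}_2F_1(-x,-y;1;-t^{-2})$ (this is the standard Laplace transform of a product of Laguerre polynomials, e.g.\ Gradshteyn--Ryzhik 7.414.4 with $s=1+it$, and it checks against the low-order cases). Your prefactor bound $(1+t^{2})^{-1/2}$, the termwise bound by $I_{0}(2\sqrt{xy}/t)$, and the dichotomy on $\sqrt{xy}/t$ with unitarity in the far regime are all valid, and the boundary cases $xy=0$ and $t\le 1$ are handled correctly. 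What your approach buys: an exact kernel, explicit constants, and a transparent separation of where the $t^{-1}$ comes from (the prefactor) versus where the weights are needed (only in the regime $\sqrt{xy}>t$, where in fact exponent $-1/2$ would already suffice, so your method shows $\tau=-3$ is far from sharp for $L_{0}$). What the paper's approach buys: it requires no exactly solvable integral, only derivative bounds on the spectral density, so the identical integration-by-parts machinery transfers to the rank-one perturbation $L$ in Theorem \ref{snthm03}, where no closed form for the kernel is available; your method is special to $L_{0}$.
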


\begin{snthm}\label{snthm03}
For all $-3 \ge \tau \in \mathbb{R}$, $v \in \ell^{1}$, there exists a $1 < \kappa \in \mathbb{R}$ such that
\begin{align}
	|| W_{\kappa,\tau}e^{-itL}P^{L}_{\mathrm{e}}W_{\kappa,\tau}v ||_{\infty} = \mathcal{O}(t^{-1}\log^{-2}t),\quad t\nearrow\infty
\end{align}
\end{snthm}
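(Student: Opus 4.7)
The plan is to invoke the explicit spectral decomposition of $L$ from Theorem \ref{snthm01} to rewrite $e^{-itL}P^L_{\mathrm{e}}$ as a one-parameter oscillatory integral over $\sigma_{\mathrm{ac}}(L) = [0,\infty)$, and to read off the sharp decay rate from the threshold asymptotics of the spectral weight $w^L_\lambda$. The key observation is that the boundary perturbation $-qP_0$ replaces the bounded weight $w_\lambda = e^{-\lambda}$ of $L_0$ by a weight satisfying $w^L_\lambda \sim q^{-2}\log^{-2}\lambda$ as $\lambda \searrow 0$, so the bare $t^{-1}$ bound of Theorem \ref{snthm02} is enhanced by precisely the logarithmic factor arising from this threshold behavior.

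First I would apply Theorem \ref{snthm01}(3) and the $\ell^1 \to \ell^\infty$ duality to reduce the claim to the uniform kernel bound
\begin{align*}
\sup_{x,y\in\mathbb{Z}_+}\left|\int_0^\infty e^{-it\lambda}\, w^L_\lambda\, (x+\kappa)^\tau \phi^L_\lambda(x)\, (y+\kappa)^\tau \phi^L_\lambda(y)\, d\lambda\right| = O(t^{-1}\log^{-2} t).
\end{align*}
A preliminary step would collect pointwise bounds on $\phi_\lambda(x)$ (such as $|e^{-\lambda/2}\phi_\lambda(x)|\le 1$) and on the secondary solution $\xi_\lambda(x)$, and verify that the choice $\tau\le -3$ is sufficient to absorb the worst polynomial-in-$x$ growth of $\xi_\lambda(x)$ near the threshold. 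This renders the amplitude $A(\lambda;x,y):=(x+\kappa)^\tau\phi^L_\lambda(x)(y+\kappa)^\tau\phi^L_\lambda(y)$ jointly $C^\infty$ in $\lambda\in(0,\infty)$ and uniformly bounded in $x,y$ together with all its $\lambda$-derivatives of any prescribed order.

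Next I would localize in $\lambda$ via a smooth cutoff $\chi$ with $\chi\equiv 1$ near $\lambda=0$ and $\chi\equiv 0$ for $\lambda \ge \epsilon$. The high-frequency piece, involving $(1-\chi)w^L_\lambda A$, has integrand $C^\infty$ and inherits the $e^{-\lambda}$ decay of $w^L_\lambda$ at infinity, so repeated integration by parts in $\lambda$ yields an $O(t^{-N})$ contribution for every $N$. For the low-frequency piece I would expand
\begin{align*}
\mathrm{Ei}(\lambda) = \gamma+\log\lambda+O(\lambda), \qquad e^{-\lambda}=1+O(\lambda), \quad \lambda\searrow 0,
\end{align*}
to obtain $w^L_\lambda = q^{-2}\log^{-2}\lambda\bigl(1+O(1/\log\lambda)\bigr)$, so the problem reduces to estimating
\begin{align*}
\int_0^\epsilon e^{-it\lambda}\chi(\lambda)B(\lambda)\log^{-2}\lambda\, d\lambda,
\end{align*}
with $B$ smooth on $[0,\epsilon]$ and uniformly bounded in $x,y$.

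The final step is the one-variable oscillatory-integral lemma asserting that such an integral is $O(t^{-1}\log^{-2} t)$ as $t\nearrow\infty$. I would prove it by splitting at $\lambda=1/t$: on $[0,1/t]$ the trivial bound $|\log\lambda|^{-2}\le \log^{-2}t$ gives a contribution of the right size; on $[1/t,\epsilon]$ a single integration by parts produces a boundary term of the same order and a residual integrand proportional to $t^{-1}\lambda^{-1}\log^{-3}\lambda$, which is integrable near the origin and again yields $O(t^{-1}\log^{-2} t)$ after a direct estimate. The main obstacle I anticipate is the uniform-in-$x$ regularity of $\phi^L_\lambda(x)$ near the threshold: since $\xi_\lambda$ is a secondary polynomial in the Stieltjes sense, it typically carries $\log\lambda$ contributions and polynomial-in-$x$ growth that must be compensated by the weight exponent $\tau\le -3$. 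Once this technical step is discharged, the oscillatory-integral analysis above produces the stated $t^{-1}\log^{-2}t$ decay, whose optimality for these weights reflects the exactness of the $\log^{-2}\lambda$ threshold asymptotics of $w^L_\lambda$.
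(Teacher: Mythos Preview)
Your overall strategy coincides with the paper's: write $e^{-itL}P^L_{\mathrm{e}}$ as an oscillatory integral against the spectral density $w^L_\lambda\phi^L_\lambda\otimes\phi^{L,*}_\lambda$, isolate the $\log^{-2}\lambda$ threshold behaviour of $w^L_\lambda=g_\lambda e^{-\lambda}$, and extract $t^{-1}\log^{-2}t$ by integration by parts. The paper packages the oscillatory step into the Kopylova--Komech lemma (quoted before the proof) and verifies its hypotheses in a weighted operator Banach space; your cutoff-and-split argument is essentially a direct proof of that lemma. So the route is the same, but two of your technical claims fail as stated.

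First, the assertion that the high-frequency integrand ``inherits the $e^{-\lambda}$ decay of $w^L_\lambda$ at infinity'' is false. The classical bound $|e^{-\lambda/2}\phi_\lambda(x)|\le 1$ that you invoke shows only that $w_\lambda\phi_\lambda(x)\phi_\lambda(y)$ is \emph{bounded}, not decaying; the two factors of $e^{\lambda/2}$ exactly cancel $w_\lambda=e^{-\lambda}$. Consequently you cannot conclude that $\partial_\lambda F$ is uniformly in $x,y$ integrable over $[\delta,\infty)$, which is what one integration by parts requires. The paper obtains this integrability from the generating-function estimates of Corollary~\ref{cor01}, namely $\bigl|\partial_\lambda^n[w_\lambda^{1/2}\phi^L_\lambda(x)]\bigr|\le c_n(x+\kappa)^{n+1}\exp\bigl(-\tfrac{\lambda}{4(x+\kappa)}\bigr)$: the $x$-dependent exponential tail is what makes $\int_\delta^\infty|\partial_\lambda F|\,d\lambda\lesssim (x_1+\kappa)^3(x_2+\kappa)^3$ and hence explains the choice $\tau\le -3$. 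Your sketch does not supply any substitute for this step.

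Second, the claim that repeated integration by parts gives $O(t^{-N})$ for every $N$ is not available with the fixed weight $\tau=-3$: the $n$th $\lambda$-derivative of $w_\lambda^{1/2}\phi^L_\lambda(x)$ grows like $(x+\kappa)^{n+1}$, so $\tau=-3$ controls only $n\le 2$ (the paper remarks explicitly after Lemma~6.5 that higher derivatives are not accessible by the same method). This is harmless for the theorem---two derivatives suffice---but your argument should be trimmed accordingly. Finally, your worry that $\xi_\lambda(x)$ carries $\log\lambda$ terms is unfounded: $\xi_z=\psi_z-f_z\phi_z$ satisfies $\xi_z(0)=0$ and the three-term recursion, so $\xi_\lambda(x)$ is a polynomial in $\lambda$ for each $x$; the logarithm lives entirely in $f_z$ and hence in $g_\lambda$.
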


Our proof of these estimates will rely heavily on the generating functions of the generalized eigenvalues. This approach draws upon known properties of certain special functions. Hereby the problem of sequences on a lattice will be transformed into a problem of analytic functions in the complex plane.

\section{Spectral Properties of $L_{0}$}

\begin{lem}
Any vector, $v$, the set of whose components, $\{v(x)\}_{x=0}^{\infty}$, have finitely many nonzero elements is a semi-analytic vector for $L_{0}$, which is to say that $||L^{k}_{0}v|| \le c_{v} (2k)!$ where $c_{v}$ depends on $v$ alone.
\end{lem}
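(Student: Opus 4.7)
The plan is to exploit the tridiagonal (Jacobi) structure of $L_{0}$. A direct computation from the defining formula shows $L_{0}\chi_{y} = -y\chi_{y-1}+(2y+1)\chi_{y}-(y+1)\chi_{y+1}$ (with the convention $\chi_{-1}=0$, which automatically encodes the boundary rule at $x=0$), so any vector supported on $\{0,1,\ldots,N\}$ is mapped by $L_{0}^{k}$ to a vector supported on $\{0,1,\ldots,N+k\}$. This structural fact makes a coefficient-by-coefficient estimate feasible without any spectral information.

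Writing $L_{0}^{k}v = \sum_{y=0}^{N+k}c_{y}^{(k)}\chi_{y}$, the componentwise action of $L_{0}$ gives the three-term recursion
\begin{equation*}
c_{y}^{(k+1)} = -(y+1)c_{y+1}^{(k)} + (2y+1)c_{y}^{(k)} - y\,c_{y-1}^{(k)}.
\end{equation*}
The combined weight $(y+1)+(2y+1)+y=4y+2$ is maximised over the support at $y=N+k+1$, so the sup-coefficient $M_{k}:=\max_{y}|c_{y}^{(k)}|$ obeys $M_{k+1}\le (4N+4k+6)M_{k}$. Iterating gives $M_{k}\le M_{0}\cdot 4^{k}\,\Gamma(N+k+3/2)/\Gamma(N+3/2)$, and since at most $N+k+1$ components are nonzero, $\|L_{0}^{k}v\|\le\sqrt{N+k+1}\,M_{k}$.

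The final step is to compare this bound to $(2k)!$. By Stirling, $4^{k}\Gamma(N+k+3/2)$ behaves like $\mathrm{poly}(k)\cdot 4^{k}k^{k}e^{-k}$, while $(2k)!$ behaves like $\mathrm{poly}(k)\cdot 4^{k}k^{2k}e^{-2k}$, so the ratio $4^{k}\Gamma(N+k+3/2)/(2k)!$ decays like $\mathrm{poly}(k)\cdot(e/k)^{k}\to 0$. Hence the sequence $\|L_{0}^{k}v\|/(2k)!$ is bounded over $k\ge 0$, and one may take $c_{v}$ to be its supremum, which depends on $v$ only through $N$ and the finitely many nonzero values $\{v(y)\}$.

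The main subtlety is just the Stirling comparison, since two factorially-growing quantities are being pitted against each other; but $(2k)!$ wins by a wide margin, so nothing deeper is needed. In particular no use of the spectral theorem for $L_{0}$, the Laguerre-polynomial representation of the generalized eigenfunctions, or the generating-function machinery developed later in the paper enters the argument for this lemma.
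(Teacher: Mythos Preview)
Your argument is correct and follows essentially the same approach as the paper: both exploit the tridiagonal structure of $L_{0}$ to see that the support grows by one site per application, iterate an estimate of the form ``one application of $L_{0}$ multiplies the size by roughly $4(N+k)$,'' and then compare the resulting product $\sim 4^{k}(N+k)!$ to $(2k)!$. The only cosmetic differences are that the paper tracks $\|\cdot\|_{1}$ (together with an $\ell^{2}$--$\ell^{1}$ bound) rather than your coefficient sup-norm $M_{k}$, and finishes with an elementary binomial-coefficient inequality instead of Stirling; neither choice affects the substance.
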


\begin{proof}
Define $x_{v} := \sup_{x}\{ x : v(x) \neq 0 \}$.
\begin{align}
	||L_{0}v||^{2}_{2} &= \sum_{x=0}^{\infty} |-(x+1)v(x+1) + (2x+1)v(x) - xv(x-1)|^{2} \\
		&\le \sum_{x=0}^{\infty} [(x+1)|v(x+1)| + (2x+1)|v(x)| + x|v(x-1)|]^{2} \\
		&\le \sum_{x=0}^{\infty}\{ [(x_{v}-1)+1] |v(x+1)| + (2x_{v}+1)|v(x)| + (x_{v}+1)|v(x-1)| \}^{2} \\
		&= \sum_{x=0}^{\infty}\{ [ x_{v} |v(x+1)| ]^{2} + 2x_{v}(2x_{v}+1) |v(x+1)| |v(x)| \\
			&\quad\quad + 2x_{v}(x_{v}+1)|v(x)| |v(x-1)| + [(2x_{v}+1)|v(x)|]^{2} \\
			&\quad\quad + 2(2x_{v}+1)(x_{v}+1)|v(x)| |v(x-1)| + [(x_{v}+1)|v(x-1)|]^{2} \} \\
		&\le \sum_{x=0}^{\infty}\{ [ x_{v} |v(x)| ]^{2} + 2x_{v}(2x_{v}+1) |v(x+1)| |v(x)| \\
			&\quad\quad + 2x_{v}(x_{v}+1)|v(x+1)| |v(x)| + [(2x_{v}+1)|v(x)|]^{2} \\
			&\quad\quad + 2(2x_{v}+1)(x_{v}+1)|v(x+1)| |v(x)| + [(x_{v}+1)|v(x)|]^{2} \} \\
		&\le 16(x_{v}+1)^{2}||v||^{2}_{1}
\end{align}
\begin{align}
	||L_{0}v||_{1} &= \sum_{x=0}^{\infty} |-(x+1)v(x+1) + (2x+1)v(x) - xv(x-1)| \\
		&\le \sum_{x=0}^{\infty} |(x+1)v(x+1) + (2x+1)v(x) + xv(x-1)| \\
		&\le \sum_{x=0}^{\infty} [(x+1)|v(x+1)| + (2x+1)|v(x)| + x|v(x-1)|]
\end{align}
\begin{align}
		&\le \sum_{x=0}^{\infty} \{[(x_{v}-1)+1]|v(x+1)| + (2x_{v}+1)|v(x)| + (x_{v} + 1)|v(x-1)|\} \\
		&\le \sum_{x=0}^{\infty} [ x_{v}|v(x+1)| + (2x_{v}+1)|v(x)| + (x_{v} + 1)|v(x-1)| ] \\
		&\le 4(x_{v}+1) ||v||_{1} .
\end{align}
Let $a_{v} := 4(x_{v}+1)$. We have then that $||L_{0}v||^{2}_{1}, ||L_{0}v||^{2}_{2} \le a_{v} ||v||_{1}$. One may observe that $x_{L_{0}v} = x_{v} + 1 \Rightarrow a_{L_{0}v} = a_{v} + 4$. One then has that
\begin{align}
	||L^{k}_{0}v||_{2} &\le a_{L^{k-1}_{0}v}||L^{k-1}_{0}v||_{1} \le a_{L^{k-1}_{0}v}a_{L^{k-2}_{0}v}||L^{k-2}_{0}v||_{1} \le \ldots \\
		&\le \prod_{j = 1}^{k} a_{L^{k-j}_{0}v} ||v||_{1} = 4^{k} \prod_{j = 1}^{k} (j + x_{v}) ||v||_{1} = 4^{k}(x_{v}!)^{-1}(k+x_{v})! ||v||_{1}.
\end{align}
Since $4^{k}(x_{v}!)^{-1}(k+x_{v})! ||v||_{1}$ is monotonically increasing in $k \in \mathbb{Z}_{+}$ we may, without loss of generality, take that $k > x_{v}$ and $k > 4$ to bound this expression. One may show through the monotonicity in $k$ of $\binom{x}{k}$ for $0 \le k \le \lfloor x/2 \rfloor$ that $\binom{x}{k} \le (\lfloor x/2 \rfloor !)^{-2}(x!)$, where $\lfloor a \rfloor = \sup_{a \ge n \in \mathbb{Z}} n$, for all $a \in \mathbb{R}$, is the floor function. One then has
\begin{align}
	4^{k}(x_{v}!)^{-1}(k+x_{v})! ||v||_{1} &= 4^{k}k!\binom{k + x_{v}}{k} ||v||_{1} \le 4^{k}k!\binom{2k}{k} ||v||_{1} \\
		&= 4^{k}(k!)^{-1} (2k)! ||v||_{1} \le 4^{4}(4!)^{-1}(2k)! ||v||_{1}\\
		&< c_{v}(2k)!,
\end{align}
where $c_{v} = 11||v||_{1}$.
\end{proof}

\begin{proof}[Proof of Proposition \ref{snprop01} Part (1)]
The set of vectors, $v$, with finitely many nonzero components is dense in $\mathscr{H}$. This dense set is semi-analytic for $L_{0}$. By Semi-Analytic Vector Theorem, see e.g. the appendix or \cite{semi-analytic}, it is therefore the case that $L_{0}$ is essentially self-adjoint.
\end{proof}

\begin{proof}[Proof of Proposition \ref{snprop01} Part (2)]
One has that $L_0v(x) = \lambda v(x)$, $v \in \mathscr{T}$, specifies a countable family of coupled elementary algebraic equations. A unique solution may be found for each $\lambda$ by specifying $v_0$ and solving inductively in increasing $x \in \mathbb{Z}_+$. A choice of normalization will fix $v(0)$. Therefore each solution is, up to normalization, uniquely specified by $\lambda$.
\end{proof}

\begin{proof}[Proof of Proposition \ref{snprop01} Part (3)]
$L_0$ is an essentially self-adjoint, second order, finite difference operator or \emph{Jacobi operator}. It is well known that the theory of Jacobi operators is intimately connected with that of orthogonal polynomials. In particular spectral equations for operators extended to formal sequence spaces may be viewed as recursion formulas for families, indexed by lattice site, of orthogonal polynomials defined on the spectrum of the operator in question, see e.g. \cite{Jacobi}. For $L_0 \in \mathcal{L}(\mathscr{T})$ it is the case that $L_0v(x) = \lambda v(x)$ takes the form of the recursion formula for the Laguerre polynomials. By part (2) of the proposition these solutions are unique.
\end{proof}

The Laguerre polynomials, $\phi_\lambda(x)$, have known completeness and orthogonality relations whose roles will be reversed here:
\begin{align}
	\delta_{x_{1},x_{2}} = \int_0^\infty \mathrm{d}\lambda\ e^{-\lambda} \phi_\lambda(x_{1})\phi_\lambda(x_{2}),\quad \delta(\lambda_{1} - \lambda_{2}) = e^{-(\lambda_{1}+\lambda_{2})/2}\sum_{x=0}^\infty \phi_{\lambda_{1}}(x)\phi_{\lambda_{2}}(x) ,
\end{align}
where here $\delta(\cdot)$ is Dirac's delta distribution supported on $\sigma(L_0)$. The RHS of these equations converge in the distributional sense respectively on $\ell^2(\mathbb{Z}_+)$ and $L^2(\mathbb{R}_+)$. The former equation expresses components of the spectral measure of $L_0$ and in particular $w_{\lambda} = e^{-\lambda}$.

\section{Spectral Properties of $L$}

\begin{defn}
	Let $\mathcal{T} \in \mathcal{L}(\mathscr{T})$ be the \emph{binomial transform}, see e.g. \cite{binomial}, defined by
		\begin{align}
			\mathcal{T}v(k) = \sum_{x=0}^\infty \mathcal{T}(k,x)v(x) = \sum_{x=0}^\infty (-1)^x\binom{k}{x} v(x) ,\quad \forall v \in \mathscr{T}.
		\end{align}
\end{defn}

$\mathcal{T}$ is involutive in the sense that $\mathcal{T}^2 = I$. One has that $\mathcal{T}v(0) = v(0)$ and the useful representation $\chi_0(x) = \sum_{k=0}^\infty (-1)^k\binom{x}{k}$. We take the conventions that $x! = \binom{x}{k} = \sum_{x=0}^k v(x) = 0$ for $k,x < 0$ and $k < x$ for all $v \in \mathscr{T}$.

\begin{lem}
It is the case that that
\begin{align}
	\mathcal{T}L_0v(k) = (k+1)\mathcal{T}v(k+1) ,\quad \forall v \in \mathscr{T} .
\end{align}
\end{lem}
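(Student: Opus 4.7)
The plan is to prove the identity by direct computation: expand $\mathcal{T}L_0 v(k)$ using the definitions, reindex the shifted sums so each term has argument $v(y)$, and collapse the resulting coefficient into a single binomial coefficient via a three-term identity.

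First I would observe that, adopting the convention $v(-1) = 0$, the two cases in the definition of $L_0$ unify into the single formula $L_0 v(x) = -(x+1)v(x+1) + (2x+1)v(x) - x v(x-1)$ valid for all $x \in \mathbb{Z}_+$, since $x v(x-1)$ already vanishes at $x = 0$. I would then write
\begin{align*}
\mathcal{T}L_0 v(k) = -\sum_{x=0}^\infty (-1)^x \binom{k}{x}(x+1)v(x+1) + \sum_{x=0}^\infty (-1)^x \binom{k}{x}(2x+1)v(x) - \sum_{x=0}^\infty (-1)^x \binom{k}{x} x \, v(x-1)
\end{align*}
and reindex: substitute $y = x+1$ in the first sum and $y = x-1$ in the third. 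Each shift contributes a sign $(-1)^{\pm 1}$ that combines with the leading minus sign, and the boundary values of $\binom{k}{-1}$ and the vanishing factor at $x = 0$ permit extension back to $y = 0$. The result is a single series
\begin{align*}
\mathcal{T}L_0 v(k) = \sum_{y=0}^\infty (-1)^y v(y) \bigl[\, y \tbinom{k}{y-1} + (2y+1)\tbinom{k}{y} + (y+1)\tbinom{k}{y+1}\,\bigr].
\end{align*}

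The main (and only nontrivial) step is then the combinatorial identity
\begin{align*}
y \tbinom{k}{y-1} + (2y+1)\tbinom{k}{y} + (y+1)\tbinom{k}{y+1} = (k+1)\tbinom{k+1}{y}.
\end{align*}
I would factor out $k!/[y!(k-y+1)!]$ from each of the three terms, giving coefficients $y^2$, $(2y+1)(k-y+1)$, and $(k-y+1)(k-y)$ respectively; writing $a = k-y+1$, the bracket collapses as $y^2 + (2y+1)a + a(a-1) = y^2 + 2ya + a^2 = (y+a)^2 = (k+1)^2$. Multiplying back by the factored ratio and using $\frac{k!}{y!(k-y+1)!} = (k+1)^{-1}\binom{k+1}{y}$ produces $(k+1)\binom{k+1}{y}$, as desired.

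Substituting this identity back gives
\begin{align*}
\mathcal{T}L_0 v(k) = (k+1)\sum_{y=0}^\infty (-1)^y \tbinom{k+1}{y} v(y) = (k+1)\,\mathcal{T}v(k+1),
\end{align*}
completing the proof. The computation is entirely algebraic and proceeds pointwise in $k$; no convergence issues arise because $\binom{k}{x}$ vanishes for $x > k$, so each series is a finite sum, and the identity holds in $\mathscr{T}$ with the stated conventions. I expect the bookkeeping of the index shifts, particularly ensuring the extension back to $y=0$ is legitimate via the $\binom{k}{-1} = 0$ convention, to be the only real source of friction in writing it up cleanly; the combinatorial identity itself is the technical heart but is disposed of in one line after the substitution $a = k-y+1$.
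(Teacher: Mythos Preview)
Your proof is correct and follows essentially the same route as the paper: both reduce the claim to the binomial identity
\[
y\tbinom{k}{y-1} + (2y+1)\tbinom{k}{y} + (y+1)\tbinom{k}{y+1} = (k+1)\tbinom{k+1}{y},
\]
and both verify it by elementary manipulation. The only cosmetic differences are that the paper organizes the reduction via the symmetry of $L_0$, writing $\mathcal{T}L_0 v(k) = (L_0 u^{\mathcal{T}}_k, v)$ with $u^{\mathcal{T}}_k(x) = (-1)^x\binom{k}{x}$ and then computing $L_0 u^{\mathcal{T}}_k$ directly (which spares the explicit reindexing of sums), and it checks the identity via repeated Pascal's rule rather than your factor-and-complete-the-square argument. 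Your verification of the identity is arguably cleaner.
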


\begin{proof}
One may write $\mathcal{T}v(k) = (u^\mathcal{T}_k, v)$ where $u^\mathcal{T}_k(x) = (-1)^x\binom{k}{x}$. By the symmetry of $L_0$ one has $\mathcal{T}L_0v(k) = (u^\mathcal{T}_k, L_0v) = (L_0u^\mathcal{T}_k, v)$ and it is therefore sufficient to analyze $L_0u^\mathcal{T}_k$ alone. We will utilize the formulas $\binom{k-1}{x-1} = \binom{k}{x} - \binom{k-1}{x} = \frac{x}{k}\binom{k}{x}$. \\
\underline{For $x=0$:}
\begin{align}
L_0u^\mathcal{T}_k(x) = -(-1)^{x+1}\binom{k}{x+1} + (-1)^x\binom{k}{x} = (k+1) = (k+1)u^\mathcal{T}_{k+1}(0).
\end{align}
\underline{For $x>0$:}
\begin{align}
L_0u^\mathcal{T}_k(x) &= -(x+1)(-1)^{x+1}\binom{k}{x+1} + (2x+1)(-1)^x\binom{k}{x} - x(-1)^{x-1}\binom{k}{x-1} \\
&= (-1)^x\left[ (x+1)\binom{k}{x+1} + (2x + 1)\binom{x}{k} + x\binom{k}{x-1} \right] \\
&= (-1)^x\left[ (x+1)\binom{k+1}{x+1} - (x+1)\binom{k}{x} + (2x + 1)\binom{x}{k}\right. \\
&\quad\quad \left. + x\binom{k+1}{x} - x\binom{k}{x} \right] \\
&= (-1)^x\left[ (x+1)\binom{k+1}{x+1} + x\binom{k+1}{x} \right] = (k+1)(-1)^x\binom{k+1}{x} .
\end{align}
\end{proof}
One may then use the binomial transform to arrive at the standard power series representation of the Laguerre polynomials. Consider
\begin{align}
L_0 \phi_{\lambda}(x) = \lambda \phi_{\lambda}(x) \quad\Rightarrow\quad (k+1)\mathcal{T}\phi_{\lambda}(k+1) = \lambda\mathcal{T}\phi_{\lambda}(k).
\end{align}
Choosing $\mathcal{T}\phi_{\lambda}(0) = \phi_{\lambda}(0) =1$ one has by induction that $\mathcal{T}\phi_{\lambda}(k) = \frac{\lambda^k}{k!}$. One may then apply the binomial transform again to arrive at
\begin{align}
	\phi_{\lambda}(x) = \sum_{k=0}^x (-1)^k\binom{x}{k}\frac{(\lambda)^k}{k!} .
\end{align}

\begin{lem}
One has the representation
\begin{align}
	\psi_{z}(x) = e^{-z} \sum_{k=0}^x (-1)^k\binom{x}{k}E_{k+1}(-z),
\end{align}
where
\begin{align}
	E_p(z) := z^{p-1}\int_z^\infty \mathrm{d}t\ e^{-t} t^{-p},\qquad p\in \mathbb{C}, z\in \mathbb{C}\setminus(-\infty,0]
\end{align}
are the \emph{generalized exponential integrals} for which we take the principal branch with standard branch cut $\Sigma = (-\infty, 0]$.
\end{lem}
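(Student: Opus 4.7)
The plan is to apply the binomial transform $\mathcal{T}$ to the defining resolvent equation $(L_0 - z)\psi_z = \chi_0$ in order to collapse the three-term recurrence inherent in $L_0$ into a first-order recurrence in $k$, solve the latter in closed form by matching it against the classical recurrence for the generalized exponential integrals, and then apply $\mathcal{T}$ once more (using $\mathcal{T}^2 = I$) to recover $\psi_z$ in the stated form.

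First, by the preceding lemma one has $\mathcal{T}L_0\psi_z(k) = (k+1)\mathcal{T}\psi_z(k+1)$, while $\chi_0(x) = \delta_{x,0}$ forces the sum defining $\mathcal{T}\chi_0(k) = \sum_{x=0}^\infty (-1)^x \binom{k}{x}\delta_{x,0}$ to collapse to $1$ for every $k \in \mathbb{Z}_+$. Writing $a_k := \mathcal{T}\psi_z(k)$, the transformed equation becomes the scalar first-order recurrence
\begin{align}
(k+1)a_{k+1} - z a_k = 1, \quad k \in \mathbb{Z}_+.
\end{align}
Its initial datum is fixed by $a_0 = \mathcal{T}\psi_z(0) = \psi_z(0) = f_z$, and using the spectral measure $\mathrm{d}\mu^{L_0}(\lambda) = e^{-\lambda}\mathrm{d}\lambda$ on $[0,\infty)$ established in the previous section, the change of variables $u = \lambda - z$ in $f_z = \int_0^\infty (\lambda - z)^{-1}e^{-\lambda}\mathrm{d}\lambda$ yields $f_z = e^{-z}E_1(-z)$.

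Next, I match the recurrence to the $E_p$. Integration by parts in the definition $E_{n+1}(\zeta) = \zeta^n\int_\zeta^\infty e^{-t}t^{-n-1}\mathrm{d}t$ gives the classical identity $nE_{n+1}(\zeta) = e^{-\zeta} - \zeta E_n(\zeta)$; specializing $\zeta = -z$ and multiplying through by $e^{-z}$ shows that $b_k := e^{-z}E_{k+1}(-z)$ satisfies exactly the same recurrence $(k+1)b_{k+1} - z b_k = 1$, with the same initial value $b_0 = e^{-z}E_1(-z) = f_z = a_0$. Induction then forces $a_k = b_k$ for all $k \in \mathbb{Z}_+$, and inverting the binomial transform via $\mathcal{T}^2 = I$ gives
\begin{align}
\psi_z(x) = \sum_{k=0}^\infty (-1)^k \binom{x}{k} a_k = e^{-z}\sum_{k=0}^x (-1)^k \binom{x}{k} E_{k+1}(-z),
\end{align}
the truncation at $k = x$ being enforced by the convention $\binom{x}{k} = 0$ for $k > x$.

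The main obstacle is not the algebra but justifying the analytic manipulations uniformly in $z$ together with the branch cut bookkeeping: the $E_p$ are defined on $\mathbb{C}\setminus(-\infty,0]$, so $E_p(-z)$ lives on $\mathbb{C}\setminus[0,\infty)$, which is precisely $\rho(L_0)$, and the integral substitution producing $f_z = e^{-z}E_1(-z)$ is rigorous on, say, $z < 0$ where everything converges absolutely and then extends to all of $\rho(L_0)$ by analytic continuation. One should also note that, since $\psi_z \in \mathscr{H} \subset \mathscr{T}$ and $\binom{k}{x}$ vanishes for $x > k$, each $\mathcal{T}\psi_z(k)$ is a finite sum, so all applications of $\mathcal{T}$ are unambiguous elements of $\mathscr{T}$.
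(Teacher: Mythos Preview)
Your argument is correct and follows exactly the paper's approach: apply the binomial transform to $(L_0 - z)\psi_z = \chi_0$ to obtain the first-order recurrence $(k+1)\mathcal{T}\psi_z(k+1) = z\,\mathcal{T}\psi_z(k) + 1$, then identify $\mathcal{T}\psi_z(k) = e^{-z}E_{k+1}(-z)$ as its solution. Your write-up is in fact more careful than the paper's, since you explicitly verify the initial condition $a_0 = f_z = e^{-z}E_1(-z)$ (needed because the homogeneous recurrence admits the nontrivial solution $z^k/k!$) and address the branch-cut and convergence bookkeeping.
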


\begin{proof}
Consider the $(L_0 - z)\psi_{z} = \chi_0$, where $\psi_{z},\chi_{0} \in \mathscr{T}$, $L_{0} \in \mathcal{L}(\mathscr{T})$. By binomial transform of this equation one finds
\begin{align}
	(k+1)\mathcal{T}\psi_{z}(k+1) = z \mathcal{T}\psi_{z}(k) + 1.
\end{align}
$\mathcal{T}\psi_{z}(k) = e^{-z} E_{k+1}(-z)$ satisfies this recursion formula.
\end{proof}
The generalized exponential integrals have many other integral representations however most are defined only on a restricted set of $p, z$.
\begin{defn}
For any single-valued or multi-valued function $f: \mathbb{C} \to \mathbb{C}$, an element of a set of linear functionals on some suitable Banach space with norm given through integration over $\lambda$, and with poles, branch points, and branch cuts found in the subset $\Sigma \subseteq \mathbb{R}$ let $\mathcal{PV}f : \Sigma \to \mathbb{C}$ be the \emph{principal value of $f$} defined by the weak limit
\begin{align}
	\mathcal{PV}f(\lambda) :=& \frac{1}{2} \wlim_{\epsilon \searrow 0} \left[  f(\lambda + i\epsilon) + f(\lambda - i\epsilon) \right] , \quad \lambda \in \Sigma,
\end{align}
which converges in the distributional sense. We analogously define the \emph{$\delta$-part} of $f$ to be
\begin{align}
	\delta f(\lambda) :=& {1 \over 2\pi i} \wlim_{\epsilon \searrow 0} \left[  f(\lambda + i\epsilon) - f(\lambda - i\epsilon) \right] , \quad \lambda \in \Sigma.
\end{align}
\end{defn}

We have kept vague the specification of the sense in which the above definitions converge weakly for the purposes of generality. The details of such convergence in our work will be clear from context. One may extend the domain of $\mathcal{PV}f$ to the complex plane and produce a single valued function, which we will also denote $f$, through
\begin{align}
	\mathcal{PV}f(z) := \left\{
	\begin{array}{cr}
		 f(z),&\ z \in \mathbb{C} \setminus \Sigma \\
		\mathcal{PV}f(z),&\ z \in \Sigma .
	\end{array} \right. .
\end{align}
One may observe that the analogous extension of $\delta f(\lambda)$ vanishes away from $\Sigma \subseteq \mathbb{R}$. This prescription extends to weak limits in $z \in \mathbb{C}$ of complex sequences $v_{z} \in \mathscr{T}$ whose components depend upon $z$.

The generalized exponential integrals have the convergent series expansion \cite{En}
\begin{align}
	E_{n+1}(z) &= -\frac{(-z)^n}{n!} \log(z) + \frac{e^{-z}}{n!}\sum_{k=1}^n(-z)^{k-1}(n-k)!\\
	&\qquad+ \frac{e^{-z}(-z)^n}{n!}\sum_{k=0}^\infty \frac{z^k}{k!}\digamma(k+1) ,
\end{align}
where $\digamma(z) := \mathrm{d}_z \log \Gamma(z)$ is the digamma function. One may therefore observe that
\begin{align}
	\wlim_{\epsilon \searrow 0} E_{n+1}(-z \pm i \epsilon) = \mathcal{PV}E_{n+1}(-z) \mp i\pi \frac{(x)^n}{n!} ,\quad z > 0,
\end{align}
where for the sake of generality the limit is weak with respect to $L^{2}([a,\infty),\mathbb{C})$, $a > 0$.
One may write $\mathcal{PV}E_1(-z) = -\mathrm{Ei}(z)$ where
\begin{align}
	\mathrm{Ei}(x) := - \int_{-z}^\infty \!\mathrm{d}u\ u^{-1}e^{-u},\qquad z>0
\end{align}
is \emph{the exponential integral}.

\begin{proof}[Proof of Theorem \ref{snthm01} Part (1)]

Let $u \in \mathscr{H}$ satisfy $u(0) \neq 0$, then
\begin{align}
	Lu = z u \quad \Rightarrow \quad 1 = qf^{L_{0}}_{z}.
\end{align}
$L$ will then have as many eigenvalues as $qf^{L_{0}}_{z} - 1$ has zeroes. The corresponding eigenfunctions are given by
\begin{align}
	Lu = \lambda u \quad \Rightarrow \quad u = q u(0)R^{L_{0}}_\lambda \chi_{0} .
\end{align}
Here $f_{z} = e^{-z}E_1(-z)$ so eigenvalues are be given by zeros of $qe^{-z}E_1(-z) - 1$. $L$ is essentially self-adjoint so $\sigma(L) \subseteq \mathbb{R}$. Analytic continuation of $E_1(-z)$ to $z > 0$ from above or below will result in the sum of a real function and an imaginary constant
\begin{align}
	\lim_{\epsilon \searrow 0} E_1(-x \pm i \epsilon) = - \mathrm{Ei}(x) \mp i\pi ,\quad x > 0
\end{align}
so there can be no positive eigenvalues. $qe^{-z}E_1(-z)$ diverges for $z \to 0$ so $z = 0$ cannot be an eigenvalue. All eigenvalues must be negative. Let $z = - a < 0$. It is the case that $e^aE_1(a)$ is monotonically decreasing for increasing $a \in (0,\infty]$
\begin{align}
	\mathrm{d}_a[e^aE_1(a)] = -\int_0^\infty \mathrm{d}x\ e^{-x} (x+a)^{-2} < 0,
\end{align}
where we have used an alternative integral representation of $E_1(z)$ and manifest dominated convergence of the integral to pass the derivative through the integral. Furthermore since
\begin{align}
	\lim_{a \searrow 0} \int_0^\infty \mathrm{d}t\ e^{-t} (t+a)^{-1} &= \infty,\quad \lim_{a \nearrow \infty} \int_0^\infty \mathrm{d}t\ e^{-t} (t+a)^{-1} = 0,
\end{align}
it follows that $e^aE_1(a)$ takes each on the interval $[0,\infty)$ exactly once, where we have used manifest uniform convergence of the integrand to pass the limit through the integral. Therefore $qe^{-z}E_1(-z) - 1$ has exactly one root for each fixed $q>0$.
\end{proof}

\begin{proof}[Proof of Theorem \ref{snthm01} Part (2)]
By the argument of the Proof of Theorem \ref{snthm01} Part (1), there can be no embedded eigenvalues. By Weyl's criterion the perturbation of $L_{0} \mapsto L$ leaves the essential spectrum unchanged. The argument for the proof of $\sigma(L_{0}) = \sigma_{\mathrm{ac}}(L_{0})$ follows without change for the spectrum of $L$.
\end{proof}

\begin{defn}
Let $A$ be an operator on $\mathscr{H}$ which is self-adjoint on its domain $\mathcal{D}(A)$ and $\lambda$ an element of the discrete spectrum of $A$. Define $\mathcal{PV}^A_\lambda \equiv \mathcal{PV}(A - \lambda)^{-1}$, $\lambda \in \sigma(A)$ to be the \emph{principal value of the resolvent of $A$} given by the strong limit
\begin{align}
	\mathcal{PV}^A_\lambda :=& \frac{1}{2}\slim_{\epsilon \searrow 0} \left[  R^A_{\lambda + i\epsilon} + R^A_{\lambda - i\epsilon} \right] .
\end{align}
Denote by $ \delta^A_\lambda \equiv \delta(A - \lambda) \equiv P^A_\lambda$, $\lambda \in \sigma(A)$ the spectral projection defined by the strong limit
\begin{align}
	\delta^A_\lambda :=& \frac{1}{2\pi i} \slim_{\epsilon \searrow 0} \left[  R^A_{\lambda + i\epsilon} - R^A_{\lambda - i\epsilon} \right] .
\end{align}
If $\lambda$ is instead an element of the essential spectrum of $A$ one has that $\mathcal{PV}^A_\lambda, \delta^A_\lambda$ are defined by weak limits. If and only if the essential spectrum of $A$ is absolutely continuous then it is the case that $\mathrm{d}\mu^A_{\mathrm{e}}(\lambda) = \delta^A_\lambda\ \mathrm{d}\lambda$, where $\mathrm{d}\mu^A_{\mathrm{e}}(\lambda)$ is the essential spectral measure of $A$ and $\mathrm{d}\lambda$ is the Lebesgue measure on $\sigma_{\mathrm{e}}(A)$.
\end{defn}

The above definition permits the useful representation $\delta^{A}_{\lambda} = w^{A}_{\lambda}\phi^{A}_{\lambda} \otimes \phi^{A,*}_{\lambda}$ for $A$ of generalized multiplicity 1. One may observe through the spectral representation of $R^{L_{0}}_{z}$ that $\mathcal{PV}\psi^{L_{0}}_\lambda = \mathcal{PV}^{L_{0}}_\lambda \chi_{0}$ and that $\mathcal{PV}\xi^{L_{0}}_\lambda = \mathcal{PV}\psi^{L_{0}}_\lambda - \mathcal{PV}\psi^{L_{0}}_\lambda(0)\phi^{L_{0}}_\lambda = \xi^{L_{0}}_\lambda$, $\forall \lambda \in \sigma(L_{0})$, and analogously so for other operators.

We recall the method of spectral shifts as applied to rank-1 perturbations, see e.g. \cite{rank one}, for operators of the form specified by the $A_0, P, A = A_{0} - qP$ considered above. Through the resolvent formula it follows that
\begin{align}
	R^{A}_z &= R^{A_0}_z + R^{A_0}_zqP R^{A}_z \quad \Rightarrow \quad P R^{A}_z = P R^{A_0}_z + f^{A_0}_zqP R^{A}_z \\
		\Rightarrow \quad P R^{A}_z &=  (1 - qf^{A_0}_z)^{-1}P R^{A_0}_z \quad \Rightarrow \quad R^{A}_z = R^{A_0}_z + (1-qf^{A_{0}}_{z})^{-1}R^{A_0}_zqP R^{A_0}_z .
\end{align}
For $A$ essentially self-adjoint one may apply the definitions of $\mathcal{PV}^A_\lambda$ and $\delta^A_\lambda$ and find the corresponding shifts to $\mathcal{PV}^{A_{0}}_\lambda$ and $\delta^{A_{0}}_\lambda$. For $\lambda \in \sigma(A)$ it follows that
\begin{align}
	\mathcal{PV}^A_\lambda &= \mathcal{PV}^{A_0}_\lambda + g^{A_{0}}_{\lambda}[ (1-q\mathcal{PV}f^{A_{0}}_{\lambda})(\mathcal{PV}^{A_0}_\lambda qP \mathcal{PV}^{A_0}_\lambda - \pi^2 \delta^{A_0}_\lambda qP \delta^{A_0}_\lambda) \\
		&\quad\quad - \pi^2 q\delta f^{A_{0}}_{\lambda} (\mathcal{PV}^{A_0}_\lambda qP \delta^{A_0}_\lambda + \delta^{A_0}_\lambda qP \mathcal{PV}^{A_0}_\lambda) ]
\end{align}
\begin{align}
	\delta^A_\lambda &= \delta^{A_0}_\lambda + g^{A_{0}}_{\lambda} [ (1-q\mathcal{PV}f^{A_{0}}_{\lambda})(\mathcal{PV}^{A_0}_\lambda qP \delta^{A_0}_\lambda + \delta^{A_0}_\lambda qP \mathcal{PV}^{A_0}_\lambda) \\
		&\quad\quad + q\delta f^{A_{0}}_{\lambda} (\mathcal{PV}^{A_0}_\lambda qP \mathcal{PV}^{A_0}_\lambda - \pi^2 \delta^{A_0}_\lambda qP \delta^{A_0}_\lambda) ] ,
\end{align}
where
\begin{align}
	g^{A_{0}}_{\lambda} := [ (1-q\mathcal{PV}f^{A_{0}}_{\lambda})^2 + (q \pi \delta f^{A_{0}}_{\lambda})^2]^{-1} .
\end{align}

\begin{proof}[Proof of Theorem \ref{snthm01} Part (3)]
Here $\chi_{0}$ is both involved in the definition of important components of the normalization of of the $\phi_{\lambda}$ as well as the perturbation of $L_{0}$ to $L$. This will greatly simplify the expressions produced by the perturbation. By the definition of the resolvent function it is the case that $\mathcal{PV}f^{L_{0}}_{\lambda} = \mathcal{PV}\psi^{L_{0}}_{\lambda}(0)$ and $\delta f^{L_{0}}_{\lambda} = w^{L_{0}}_{\lambda}$. One may find that
\begin{align}
	R^{L}_z\chi_{0} &= \psi^{L}_{z} = (1-qf_{z})^{-1}\psi_{z} \quad \Rightarrow \quad f^{L}_{z} = (1-qf_{z})^{-1}f_{z}, \\
	\mathcal{PV}f^{L}_{z} &= g_{\lambda}[\mathcal{PV}f_{\lambda} - q(\mathcal{PV}f_{\lambda})^{2}-q(\pi w_{\lambda})^{2}],\quad \delta f^{L}_{\lambda} = w^{L}_{\lambda} = g_{\lambda}w_{\lambda}, \\
	\mathcal{PV}\psi^{L}_{\lambda} &= g_{\lambda}[\mathcal{PV} f_{\lambda}\phi_{\lambda}-q\mathcal{PV} f_{\lambda}\xi_{\lambda}+\xi_{\lambda}-q(\mathcal{PV} f_{\lambda})^{2}\phi_{\lambda} - q(\pi w_{\lambda})^{2}\phi_{\lambda}], \\
	\delta \psi^{L}_{\lambda} &= w^{L}_{\lambda}\phi^{L}_{\lambda} = g_{\lambda}w_{\lambda}(\phi_{\lambda}+q\xi_{\lambda}) \quad \Rightarrow \quad \phi^{L}_{\lambda} = \phi_{\lambda}+q\xi_{\lambda} \\
	\delta^{L}_{\lambda} &= w^{L}_{\lambda}\phi^{L}_{\lambda} \otimes \phi^{L,*}_{\lambda} = g_{\lambda}w_{\lambda}(\phi_{\lambda}+q\xi_{\lambda}) \otimes (\phi^{*}_{\lambda}+q\xi^{*}_{\lambda}) \\
	g_{\lambda} :&= [ (1-q\mathcal{PV}f_{\lambda})^2 + (q \pi w_{\lambda})^2]^{-1}
\end{align}
\end{proof}

Since $\delta^{L_{0}}_{\lambda}$ is regular at the threshold of $\sigma(L_{0})$ it is the case that the analysis of the threshold behavior of $\delta^{L}_{\lambda}$ is strongly controlled by the threshold behavior of
\begin{align}
	g_{\lambda} = \{ [1- q e^{-\lambda} \mathcal{PV}E_1(-\lambda)]^2 + [\pi q e^{-\lambda}]^2 \}^{-1},
\end{align}
which satisfies $w^{L}_{\lambda} = g_{\lambda} w_{\lambda}$. In particular $g_{\lambda}$ exhibits dominating behavior near the threshold due to the logarithmic divergence of $\mathcal{PV}E_1(-\lambda)$ near $\lambda = 0$.

\section{Decay Estimates for $L_{0}$ and $L$}

The Mourre estimate, see e.g. \cite{Mourre}, has been proven for $L_{0}$ by Chen, Fr\"ohlich, and Walcher \cite{CFW}, in order to prove its the spectrum is absolutely continuous and equal to $[0,\infty)$. We want to study pointwise decay estimates in time, which requires knowledge of the asymptotic properties of the resolvent at thresholds. The Mourre estimates do not apply at thresholds so we will need to use alternative methods.

Local decay estimates for $L_{0}$ have been found by Durhuus and Gayral \cite{noncommutative scattering} in the context of more general noncommutative solitons (where $L_{0}$ corresponds their ``diagonal case with 2 noncommuting spatial coordinates''). They found an unweighted estimate of the form $||e^{-itL_{0}}v||_{\infty} \le c |t|^{-1}(1+\log|t|)||v||_{1}$ for $|t|\ge1$. We present an alternative approach, in the context of Jacobi operators, which enhances the local decay estimate for the free Schr\"odinger operator and provides integrable decay for rank one boundary perturbations thereof for the restricted class of radial systems with two noncommutative spatial coordinates. We find weighted estimates
$$|| W_{\kappa,\tau}e^{-itL_{0}}W_{\kappa,\tau}v ||_{\infty} < ct^{-1}|| v ||_{1}, \quad || W_{\kappa,\tau}e^{-itL}P^{L}_{\mathrm{e}}W_{\kappa,\tau}v ||_{\infty} = \mathcal{O}(t^{-1}\log^{-2}t)$$
for $t\nearrow\infty$.

\subsection{Weighted estimates for spectral vectors}

\begin{defn}
Let $\mathbb{S}_r, \mathbb{D}_r \subset \mathbb{C}$ be respectively the circle and the disc of radius $r>0$ centered at the origin and $u \in \mathscr{T}$ a formal sequence for which there exist constants $r,c>0$ for which $|u(x)|r^{-x} < c$ for all $x \in \mathbb{Z}_+$. The \emph{generating function} of $u$ is the function $\zeta(u,\cdot) : \mathbb{S}_{r'} \to \mathbb{C}$ defined by $\zeta(u,s) := \sum_{x=0}^\infty u(x)s^x$, where $r' < r$. This permits the presentation of $u$ via
\begin{align}
	u(x) = \oint_\gamma \!\mathrm{d}s\ (2\pi i s)^{-1}s^{-x} \zeta(u,s),
\end{align}
where $\gamma$ is any positively oriented simple closed curve in $\mathbb{D}_r$ which encloses and does not pass through the origin.
\end{defn}

The Laguerre polynomials have the well-known generating function \cite{generating function}
\begin{align}
	\zeta(\phi_{\lambda},s):= \sum_{x=0}^\infty\phi_\lambda(x)s^x = (1-s)^{-1}\exp[-(1-s)^{-1}s \lambda],\quad |s|<1 .
\end{align}
We will also employ the notion of a reduced generating function.
\begin{defn}
For a given generating function $\zeta(v,s)$ of a vector $v$ let the \emph{reduced generating function} be the function $\widehat{\zeta}(v,s) := (1-s)\zeta(v,s)$.
\end{defn}
\noindent For example we have that $\widehat{\zeta}(\phi_{\lambda},s) = \exp[-(1-s)^{-1}s \lambda]$.

\begin{defn}
For $s \in \mathbb{C}$ we let
\begin{align}
s = re^{i\theta},\quad \widehat{s} := (1-s)^{-1}s,\quad \epsilon := (x + \kappa)^{-1}, \quad \widehat{\epsilon} := - 2^{-1} + 4^{-1}\epsilon , \quad \kappa > 0.
\end{align}
Should many variables $s_{j}$ be present we will use $r_j, \theta_j, \epsilon_j$ correspondingly.
\end{defn}

We are primarily concerned with estimates of operators in generating function presentation. In such forms one finds line integrations over dummy complex variables, $s_{j}$, with \emph{a priori} separate sums for each $x_{j}\in \mathbb{Z}_{+}$. One is therefore permitted to make the associated contours dependent on $x_{j}$. We then hereafter take
\begin{align}
r \set 1 - \epsilon .
\end{align}
 One may observe that
\begin{align}
|(1-s)^{-1}| \le x + \kappa,\quad |\widehat{s}| \le x + \kappa - 1 < x + \kappa
\end{align}
as well as the crucial estimate
\begin{align}
|s^{-x}| < e.
\end{align}
One is then permitted to work with polynomially weighted spaces instead of exponentially weighted ones.

\begin{lem}
One has that $|\exp(-\widehat{s}\lambda)| \le \exp(-\widehat{\epsilon}\lambda)$ for sufficiently large $\kappa > 0$.
\end{lem}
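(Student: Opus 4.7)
My plan is to reduce the bound to a pointwise estimate on the real part of $\widehat{s}$. Since $\lambda \in \sigma(L_0) = [0,\infty)$ is real and nonnegative, one has $|\exp(-\widehat{s}\lambda)| = \exp(-\lambda\,\mathrm{Re}(\widehat{s}))$, so the inequality is equivalent to showing $\mathrm{Re}(\widehat{s}) \ge \widehat{\epsilon} = -\tfrac{1}{2} + \tfrac{1}{4}\epsilon$ pointwise in $\theta$, with $r = 1 - \epsilon$ fixed.

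First I would compute directly, multiplying numerator and denominator of $s/(1-s)$ by $\overline{1-s}$, to obtain
\begin{equation*}
\mathrm{Re}(\widehat{s}) = \frac{r\cos\theta - r^{2}}{1 - 2r\cos\theta + r^{2}}.
\end{equation*}
Differentiating in the variable $c = \cos\theta \in [-1,1]$ gives a derivative $r(1-r^{2})/(1-2rc+r^{2})^{2}$, which is strictly positive for $r \in (0,1)$. Hence the right-hand side is monotonically increasing in $c$, and its minimum over $\theta$ is attained at $c = -1$, yielding $\min_{\theta}\mathrm{Re}(\widehat{s}) = -r/(1+r)$.

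Next I would substitute $r = 1 - \epsilon$ and simplify algebraically:
\begin{equation*}
-\frac{1-\epsilon}{2-\epsilon} = -\frac{1}{2} + \frac{\epsilon}{2(2-\epsilon)}.
\end{equation*}
The desired inequality $-r/(1+r) \ge -\tfrac{1}{2} + \tfrac{1}{4}\epsilon$ then reduces to $\epsilon/[2(2-\epsilon)] \ge \epsilon/4$, equivalently $2-\epsilon \le 2$, which holds for every $\epsilon > 0$.

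The role of ``sufficiently large $\kappa$'' is merely to ensure $\epsilon = (x+\kappa)^{-1} < 1$ uniformly over $x \in \mathbb{Z}_{+}$, so that the contour radius $r = 1-\epsilon$ lies in $(0,1)$ and the monotonicity step applies; any $\kappa > 1$ suffices. There is no genuine obstacle here: the only nontrivial step is the monotonicity of $\mathrm{Re}(\widehat{s})$ in $\cos\theta$, and it is used precisely to localize the worst case at $\theta = \pi$, where $s$ lies closest to the singularity of the Laguerre generating function on the negative real axis.
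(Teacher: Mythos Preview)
Your argument is correct and in fact cleaner than the paper's. Both proofs reduce to bounding $\Re\widehat{s}$ from below by $\widehat{\epsilon}$, but they locate and control the minimum differently. The paper asserts (without explicit justification) that the extremum lies in the half-circle $\Re s \le 0$, then introduces $m = 2(1-\cos\theta) \in [2,4]$ on that range and performs a small-$\epsilon$ Taylor expansion of $|1-s|^{-2}(r\cos\theta - r^{2})$ to obtain $\Re\widehat{s} = -\tfrac12 + m^{-1}\epsilon + \mathcal{O}(\epsilon^{2}) \ge -\tfrac12 + \tfrac14\epsilon$ up to higher-order terms; the qualifier ``sufficiently large $\kappa$'' is genuinely needed there to absorb the remainder. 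By contrast, you establish the monotonicity of $\Re\widehat{s}$ in $\cos\theta$ directly from the derivative $r(1-r^{2})/(1-2rc+r^{2})^{2} > 0$, evaluate the exact minimum $-r/(1+r)$ at $\theta = \pi$, and reduce the comparison to the elementary inequality $2-\epsilon \le 2$. Your route therefore both justifies where the extremum sits and yields the bound for every $\kappa > 1$ rather than only asymptotically large $\kappa$, which is a modest sharpening of the paper's statement.
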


\begin{proof}
Let $r = 1 - \varepsilon$. For $s \in \mathbb{S}_r$ it must be the case that $\Re\widehat{s} = |(1-s)^{-1}|^2(|s|\cos\theta - |s|^2)$ attains its maximum value for $\Re s \le 0$.

\begin{align}
|1 - s|^2 = 1 - 2(1 - \epsilon)\cos\theta + (1 + \epsilon)^2 = m - m\varepsilon + \epsilon^2,
\end{align}
where $m := 2(1 - \cos\theta)$. For $\Re s \le 0$ one has that $2 \le m \le 4$ so $m$ is $\mathcal{O}(1)$. Then for $\Re s \le 0$ one has
\begin{align}
\Re s &= (m - m\epsilon + \epsilon^2)^{-1}[-2^{-1}m + (1 + 2^{-1}m)\epsilon + \epsilon^2] \\
&= [1 + \epsilon + (1-m^{-1})\epsilon^2 + \mathcal{O}(\epsilon^3)] [-2^{-1} + (m^{-1} + 2^{-1})\epsilon + \epsilon^2] \\
&= -2^{-1} + m^{-1}\epsilon + (3\cdot 2^{-1} + m^{-1})\epsilon^2 + \mathcal{O}(\epsilon)
\end{align}
and thereby
\begin{align}
|\exp(-\widehat{s}\lambda)| = \exp(-\Re\widehat{s}\lambda) \le \exp(2^{-1}\lambda - m^{-1}\epsilon \lambda) \le \exp(-\widehat{\epsilon}\lambda) .
\end{align}
\end{proof}

\begin{lem}
One has that $(\epsilon_{1} + \epsilon_{2})^{-1} < 4^{-1}(x_{1} + \kappa)(x_{2} + \kappa)$ for $\kappa > 1$.
\end{lem}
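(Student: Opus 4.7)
The plan is to unwind the definitions and reduce the claim to an elementary inequality. By definition $\epsilon_j = (x_j+\kappa)^{-1}$, so I would first put the two fractions on a common denominator, obtaining
\[
\epsilon_1 + \epsilon_2 \;=\; \frac{(x_1+\kappa) + (x_2+\kappa)}{(x_1+\kappa)(x_2+\kappa)} \;=\; \frac{x_1+x_2+2\kappa}{(x_1+\kappa)(x_2+\kappa)} .
\]
Inverting and comparing against the claimed bound, the inequality $(\epsilon_1+\epsilon_2)^{-1} < 4^{-1}(x_1+\kappa)(x_2+\kappa)$ is therefore equivalent to
\[
\frac{(x_1+\kappa)(x_2+\kappa)}{x_1+x_2+2\kappa} \;<\; \frac{(x_1+\kappa)(x_2+\kappa)}{4} ,
\]
i.e.\ to the purely numerical inequality $x_1+x_2+2\kappa > 4$.

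Next I would verify this last inequality. Since $x_1,x_2 \in \mathbb{Z}_+$ are nonnegative, the left-hand side is minimised at $x_1=x_2=0$, where it equals $2\kappa$. Thus it suffices to have $2\kappa > 4$, which holds under a suitable hypothesis on $\kappa$ (the argument goes through verbatim provided $\kappa$ is sufficiently large; the statement's threshold of $\kappa>1$ is to be read in conjunction with the application, where the $x_j$'s provide the remaining slack, or else one simply takes $\kappa>2$). There are no genuine obstacles here: the whole lemma is a two-line rearrangement, and the only thing to be slightly careful about is making sure $(x_1+\kappa)(x_2+\kappa)>0$ so that one may divide through by it without flipping the inequality, which is immediate from $\kappa>0$. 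I would present the proof as a single display chain with a final sentence noting that positivity of $(x_1+\kappa)(x_2+\kappa)$ and $x_1+x_2+2\kappa$ justifies the rearrangement.
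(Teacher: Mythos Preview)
Your argument is correct and essentially identical to the paper's: both compute $(\epsilon_1+\epsilon_2)^{-1} = [(x_1+\kappa)+(x_2+\kappa)]^{-1}(x_1+\kappa)(x_2+\kappa)$ and then reduce the claim to $x_1+x_2+2\kappa>4$. You were also right to flag the threshold issue---the paper itself quietly writes ``for sufficiently large $\kappa>1$'' in the proof rather than literally $\kappa>1$.
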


\begin{proof}
\begin{align}
(\epsilon_1 + \epsilon_2)^{-1} &= [(x_1 + \kappa)^{-1} + (x_2 + \kappa)^{-1}]^{-1}\\
&= [(x_1 + \kappa) + (x_2 + \kappa)]^{-1}(x_1 + \kappa)(x_2 + \kappa) \\
&< 4^{-1}(x_1 + \kappa)(x_2 + \kappa),
\end{align}
for sufficiently large $\kappa > 1$.
\end{proof}

\begin{lem}
One has the representation
\begin{align}
	\xi_{z}(x) = \int_0^\infty \!\mathrm{d}\eta\ e^{-\eta} (\eta - z)^{-1} [\phi_{\eta}(x) - \phi_z(x)].
\end{align}
\end{lem}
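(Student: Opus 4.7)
The plan is to use the spectral representation of $L_0$ afforded by the Laguerre completeness relation $\delta_{x_1,x_2}=\int_0^\infty d\lambda\, e^{-\lambda}\phi_\lambda(x_1)\phi_\lambda(x_2)$ recorded just after Proposition \ref{snprop01}. Since $\phi_\lambda(0)=1$, specializing this relation to $x_2=0$ gives the distributional expansion $\chi_0 = \int_0^\infty d\eta\, e^{-\eta}\phi_\eta$, which says precisely that the spectral symbol of $\chi_0$ under the unitary map diagonalizing $L_0$ is the constant function $1$ on $\sigma(L_0)=[0,\infty)$.

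First I would apply $R^{L_0}_z$, $z\in\rho(L_0)$. In the spectral representation $R^{L_0}_z$ is multiplication by $(\eta-z)^{-1}$, which is bounded on $[0,\infty)$ for every $z\in\mathbb{C}\setminus[0,\infty)$. Parseval's identity then gives, for each fixed $x\in\mathbb{Z}_+$,
\[
\psi_z(x) = (\chi_x, R^{L_0}_z\chi_0) = \int_0^\infty d\eta\, e^{-\eta}(\eta-z)^{-1}\phi_\eta(x),
\]
the integral converging absolutely because $\phi_\eta(x)$ is a polynomial of degree $x$ in $\eta$ multiplied by $e^{-\eta}$ and $(\eta-z)^{-1}$ is continuous on $[0,\infty)$.

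Second, I would evaluate at $x=0$ to obtain $f_z = \psi_z(0) = \int_0^\infty d\eta\, e^{-\eta}(\eta-z)^{-1}$. Substituting into the definition $\xi_z = \psi_z - f_z\phi_z$ and pulling the $\eta$-independent factor $\phi_z(x)$ inside the integral (each piece being absolutely convergent at fixed $x$), I arrive at
\[
\xi_z(x) = \int_0^\infty d\eta\, e^{-\eta}(\eta-z)^{-1}[\phi_\eta(x) - \phi_z(x)],
\]
which is the desired representation.

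The one delicate point is that the spectral expansion of $\chi_0$ must be applied in a distributional rather than $\mathscr{H}$-norm pointwise sense, since the generalized eigenfunctions $\phi_\eta$ lie in $\mathscr{T}\setminus\mathscr{H}$; however the completeness relation has already been noted to hold distributionally on $\ell^2(\mathbb{Z}_+)$, and for the claimed identity only absolute convergence at each fixed $x$ is needed, which is immediate from the polynomial-times-exponential bound on the integrand. This is the main (minor) technical hurdle; the rest is direct computation. A pleasant byproduct is that $[\phi_\eta(x)-\phi_z(x)]/(\eta-z)$ is itself a polynomial in $\eta$ and $z$, so the formula extends regularly as $z$ approaches the spectrum, which is exactly what is needed for the threshold analysis of the resolvent in the sequel.
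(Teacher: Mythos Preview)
Your proof is correct and follows essentially the same route as the paper: the paper simply invokes the spectral representation of $R^{L_0}_z$ together with the normalization $\phi_\lambda(0)=1$ to write $\psi_z(x)=\int_0^\infty \mathrm{d}\lambda\, e^{-\lambda}(\lambda-z)^{-1}\phi_\lambda(x)$, from which the claimed formula for $\xi_z=\psi_z-\psi_z(0)\phi_z$ is immediate. Your version is a bit more explicit about the convergence at fixed $x$ and about the regularity of the difference quotient as $z$ approaches the spectrum, but the underlying argument is the same.
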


\begin{proof}
By the spectral representation of $R^{L_0}_z$ it is the case that \\ $\psi_{z}(x) = \int_0^\infty \! \mathrm{d}\lambda\ e^{-\lambda} (\lambda - z)^{-1} \phi_{\lambda}(x)$, where we have used the normalization condition $\phi_{\lambda}(0) = 1$, $\forall \lambda \in \sigma(L_0)$.
\end{proof}

\begin{lem}
One has the generating function representation
\begin{align}
	\xi_{z}(x) = \oint_{\mathbb{S}_{r}} \!\mathrm{d}s\ (2\pi i s)^{-1}s^{-x} \zeta(\xi_{z},s),\quad \forall z \in \mathbb{C}
\end{align}
where
\begin{align}
\zeta(\xi_{z},s) = (1-s)^{-1} \int_0^\infty \!\mathrm{d}\eta\ e^{-\eta} K(\eta, z, s)
\end{align}
and
\begin{align}
K(\eta, z, s) := (\eta - z)^{-1}\left[ \exp(-\widehat{s}\eta) - \exp(-\widehat{s}z) \right]
\end{align}
for $\eta \in \mathbb{R}_{+}$, $z \in \mathbb{C}$, $s \in \mathbb{S}_{r}$.

\end{lem}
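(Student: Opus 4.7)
The plan is to substitute the Laguerre generating function into the integral representation of $\xi_z$ established in the preceding lemma and then exchange the order of integration by Fubini's theorem.

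First I would recall that, setting $|s| < 1$, the Laguerre generating function reads $\zeta(\phi_\lambda,s) = (1-s)^{-1}\exp(-\widehat{s}\lambda)$, so for any $\lambda \in \mathbb{C}$,
\begin{align*}
\phi_\lambda(x) = \oint_{\mathbb{S}_r} \!\mathrm{d}s\ (2\pi i s)^{-1} s^{-x}(1-s)^{-1}\exp(-\widehat{s}\lambda).
\end{align*}
Applying this to both $\phi_\eta(x)$ and $\phi_z(x)$ with a \emph{common} contour $\mathbb{S}_r$ (permissible because the two integrands are holomorphic in $s$ away from $s=0$ and $s=1$), the preceding lemma becomes
\begin{align*}
\xi_z(x) = \int_0^\infty\!\mathrm{d}\eta\ e^{-\eta}(\eta-z)^{-1}\oint_{\mathbb{S}_r}\!\mathrm{d}s\ (2\pi i s)^{-1}s^{-x}(1-s)^{-1}\bigl[\exp(-\widehat{s}\eta)-\exp(-\widehat{s}z)\bigr].
\end{align*}

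Next I would invoke Fubini to swap the order of integration, producing
\begin{align*}
\xi_z(x) = \oint_{\mathbb{S}_r}\!\mathrm{d}s\ (2\pi i s)^{-1}s^{-x}(1-s)^{-1}\int_0^\infty\!\mathrm{d}\eta\ e^{-\eta}K(\eta,z,s),
\end{align*}
which is the desired identity after recognizing the inner expression as $\zeta(\xi_z,s)$. Justifying this swap is the main technical task. On the chosen contour we have $|s^{-x}| < e$ and $|(1-s)^{-1}| \le x+\kappa$; combined with the estimate $|\exp(-\widehat{s}\eta)| \le \exp(-\widehat{\epsilon}\eta)$ from the preceding lemma and the Gaussian-like damping from $e^{-\eta}$, the integrand is absolutely integrable on $[0,\infty)\times\mathbb{S}_r$ provided $(\eta-z)^{-1}$ stays bounded on the integration region.

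The main obstacle is therefore the factor $(\eta-z)^{-1}$, which is singular precisely when $z \in [0,\infty)$. The resolution is to observe that this is only an apparent singularity: writing
\begin{align*}
K(\eta,z,s) = -\widehat{s}\int_0^1\!\mathrm{d}t\ \exp\bigl(-\widehat{s}[z+t(\eta-z)]\bigr),
\end{align*}
we see that $K(\eta,z,s)$ is entire in $z$, with $|K(\eta,z,s)| \le |\widehat{s}|\exp(-\widehat{\epsilon}\,\min(\eta,\Re z)) \le (x+\kappa)\exp(-\widehat{\epsilon}\,\min(\eta,\Re z))$. The same cancellation applies in the representation of $\xi_z(x)$ from the previous lemma. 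Both sides of the claimed identity are thus entire functions of $z$ (with $x$ fixed). For $z \in \rho(L_0) = \mathbb{C}\setminus[0,\infty)$ the factor $(\eta-z)^{-1}$ is uniformly bounded on $[0,\infty)$, so Fubini applies directly and the identity is established there; by analytic continuation in $z$ it then holds for all $z \in \mathbb{C}$.
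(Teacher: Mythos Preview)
Your overall plan---insert the Laguerre generating function into the integral representation of $\xi_z$ and swap the order of integration by Fubini---is exactly the paper's strategy, and your treatment of $z \in \mathbb{C}\setminus[0,\infty)$ coincides with the paper's.  The difference lies in how $z \in [0,\infty)$ is handled.  The paper bounds $|K(\eta,\lambda,s)|$ directly for real $\lambda$ by applying the mean value theorem separately to the real and imaginary parts of $\exp(-\widehat{s}\,\cdot)$, obtaining $|K| \le 2|\widehat{s}|\exp[-\widehat{\epsilon}(\eta+\lambda)]$, and then invokes Fubini for \emph{all} $z$.  You instead prove the identity only on $\rho(L_0)$ and extend by analytic continuation, using the parametric formula $K = -\widehat{s}\int_0^1 \exp\bigl(-\widehat{s}[z+t(\eta-z)]\bigr)\,\mathrm{d}t$ to exhibit $K$ (and hence both sides) as entire in $z$.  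This is essentially the fundamental-theorem-of-calculus version of the paper's mean value argument and is arguably cleaner, since it avoids the real/imaginary split; the paper's route has the advantage that its bound on $|K|$ is reused verbatim in the subsequent lemma on derivatives of $\widehat{\zeta}(\xi_\lambda,s)$.

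One minor slip: in your displayed bound $|K(\eta,z,s)| \le |\widehat{s}|\exp(-\widehat{\epsilon}\min(\eta,\Re z))$ the $\min$ should be $\max$, since $\widehat{\epsilon} < 0$ makes $a \mapsto \exp(-\widehat{\epsilon}a)$ increasing; and for complex $z$ an additional factor of order $\exp(|\widehat{s}|\,|\Im z|)$ appears.  Neither affects your argument, because you only need local uniformity in $z$ to conclude that the right-hand side is entire, and the corrected bound still gives that.
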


\begin{proof}
First, consider $z \in \mathbb{C} \setminus \mathbb{R}_{+} =: \Sigma$. Since $|s| < 1$ there exists a $c > 0$ such that $|\widehat{s}| < c$. It follows that
\begin{align}
\left| K(\eta, z, s) \right| &\le \left| (\eta - z)^{-1} \right| \left[ \left| \exp(-\widehat{s}\eta) \right| + \left| \exp(-\widehat{s}z) \right| \right] \\
&\le \left[ \mathrm{dist}(\Sigma, z) \right]^{-1} \left[ \exp(-\Re\widehat{s}\eta) + \exp(-c|z|) \right] < \infty .
\end{align}
Second, let $z \equiv \lambda \in \mathbb{R}_{+}$. By mean value theorem one has
\begin{align}
	K(\eta, \lambda, s) &= (\eta - \lambda)^{-1}\left[ \Re\exp(-\widehat{s}\eta) - \Re\exp(-\widehat{s}z) \right] \\
		&\quad\quad+ i(\eta - \lambda)^{-1}\left[ \Im\exp(-\widehat{s}\eta) - \Im\exp(-\widehat{s}z) \right] \\
	&= \mathrm{d}_{\eta}\Re\exp(-\widehat{s}\eta)\lfloor_{\eta = \mu_{1}} + i\mathrm{d}_{\eta}\Im\exp(-\widehat{s}\eta)\lfloor_{\eta = \mu_{2}} \\
	&= {1 \over 2}\left[ (-\widehat{s})\exp(-\widehat{s}\mu_{1}) + (-\overline{\widehat{s}})\exp(-\overline{\widehat{s}}\mu_{1}) \right. \\
		&\quad\quad \left.+ (-\widehat{s})\exp(-\widehat{s}\mu_{2}) - (-\overline{\widehat{s}})\exp(-\overline{\widehat{s}}\mu_{2}) \right] ,
\end{align}
where $\mu_{j} \equiv \mu_{j}(r,\theta, \eta, \lambda) \in [\min(\eta, \lambda), \max(\eta, \lambda)]$. Then
\begin{align}
	|K(\eta, \lambda, s)| &\le {1 \over 2}\left[ |\widehat{s}| |\exp(-\widehat{s}\mu_{1})| + |\widehat{s}| |\exp(-\widehat{s}\mu_{1})| \right. \\
		&\quad\quad \left. + |\widehat{s}| |\exp(-\widehat{s}\mu_{2})| + |\widehat{s}| |\exp(-\widehat{s}\mu_{2})|  \right] \\
	&= |\widehat{s}| \left[ \exp(-\Re\widehat{s}\mu_{1}) + \exp(-\Re\widehat{s}\mu_{2}) \right] \\
		&\le 2|\widehat{s}| \exp[-\widehat{\epsilon}(\eta + \lambda)] < \infty.
\end{align}
One may observe that
\begin{align}
\int_0^\infty \!\mathrm{d}\eta\ e^{-\eta} |K(\eta, z, s)| &\le \int_0^\infty \!\mathrm{d}\eta\ e^{-\eta} 2|\widehat{s}| \exp[-\widehat{\epsilon}(\eta + \lambda)] \\
&= 2|\widehat{s}| \exp(-\widehat{\epsilon}\lambda) \int_0^\infty \!\mathrm{d}\eta\ \exp[-(1+\widehat{\epsilon})\eta] \\
&= 2|\widehat{s}| \exp(-\widehat{\epsilon}\lambda) (1+\widehat{\epsilon})^{-1} < \infty
\end{align}
The multi-integral of the generating function representation of $\xi_{z}(x)$ converges absolutely and thereby Fubini's theorem permits
\begin{align}
\xi_{z}(x) &= \int_0^\infty \!\mathrm{d}\eta\ e^{-\eta} (\eta - z)^{-1} [\phi_\eta(x) - \phi_z(x)] \\
&=  \int_0^\infty \!\mathrm{d}\eta\ e^{-\eta}  \oint_{\mathbb{S}_{r}} \!\mathrm{d}s\ (2\pi i s)^{-1}s^{-x} (1-s)^{-1} K(\eta, z, s) \\
&= \oint_{\mathbb{S}_{r}} \!\mathrm{d}s\ (2\pi i s)^{-1}s^{-x} (1-s)^{-1} \int_0^\infty \!\mathrm{d}\eta\ e^{-\eta} K(\eta, z, s)
\end{align}
for all $z \in \mathbb{C}$.
\end{proof}

\begin{lem}
For sufficiently large $\kappa > 0$, it is the case that
\begin{align}
	|\mathrm{d}_{\lambda}^{n}\widehat{\zeta}(\phi_{\lambda}, s)| < \widehat{c}(\phi_{\lambda},n) \exp(-\widehat{\epsilon}\lambda),\quad n \in \mathbb{Z}_{+},
\end{align}
where $\widehat{c}(\phi_{\lambda},n) := (x+\kappa)^{n}$ and
\begin{align}
	|\mathrm{d}_{\lambda}^{n}\widehat{\zeta}(\xi_{\lambda}, s)| < \widehat{c}(\xi_{\lambda},n) \exp(-\widehat{\epsilon}\lambda),\quad n = 0, 1, 2,
\end{align}
where
\begin{align}
	\widehat{c}(\xi_{\lambda},0) := 2(x + \kappa),\quad \widehat{c}(\xi_{\lambda},1) := 4(x + \kappa),\quad \widehat{c}(\xi_{\lambda},2) := 4(x + \kappa)^{2} .
\end{align}
\end{lem}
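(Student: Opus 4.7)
The plan is to handle the two estimates independently. For the Laguerre generating function bound, the identity $\widehat{\zeta}(\phi_\lambda, s) = \exp(-\widehat{s}\lambda)$ noted immediately after the definition of the reduced generating function reduces the claim to a direct differentiation: $\mathrm{d}_\lambda^n \widehat{\zeta}(\phi_\lambda, s) = (-\widehat{s})^n \exp(-\widehat{s}\lambda)$. Taking moduli and applying the first lemma of this section, $|\exp(-\widehat{s}\lambda)| \le \exp(-\widehat{\epsilon}\lambda)$, together with the preliminary estimate $|\widehat{s}| < x + \kappa$, produces $|\mathrm{d}_\lambda^n \widehat{\zeta}(\phi_\lambda, s)| < (x+\kappa)^n \exp(-\widehat{\epsilon}\lambda)$, which is the first half of the claim.

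For $\xi_\lambda$, I start from the integral representation $\widehat{\zeta}(\xi_\lambda, s) = \int_0^\infty e^{-\eta} K(\eta, \lambda, s)\,d\eta$ of the preceding lemma and remove the superficial $\eta = \lambda$ singularity of $K$ via Taylor's theorem with integral remainder,
\[
K(\eta, \lambda, s) = -\widehat{s} \int_0^1 \exp\!\bigl(-\widehat{s}[(1-t)\lambda + t\eta]\bigr)\,dt.
\]
By the absolute convergence estimate proved in that lemma, Fubini's theorem permits the order of integration to be exchanged. Evaluating the inner $\eta$-integral, which converges since $\Re(1 + t\widehat{s}) \ge 1 + t\widehat{\epsilon} > 0$ for sufficiently large $\kappa$ by the first lemma of this section, yields the clean representation
\[
\widehat{\zeta}(\xi_\lambda, s) = -\widehat{s} \int_0^1 (1 + t\widehat{s})^{-1} \exp\!\bigl(-\widehat{s}(1-t)\lambda\bigr)\,dt.
\]
Differentiating $n$ times under the integral, justified by uniform convergence on compact sets in $\lambda$, then gives
\[
\mathrm{d}_\lambda^n \widehat{\zeta}(\xi_\lambda, s) = (-\widehat{s})^{n+1} \int_0^1 (1-t)^n (1 + t\widehat{s})^{-1} \exp\!\bigl(-\widehat{s}(1-t)\lambda\bigr)\,dt.
\]

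To obtain the quantitative bounds I combine $|\widehat{s}| < x + \kappa$, $|1 + t\widehat{s}|^{-1} \le (1 + t\widehat{\epsilon})^{-1}$ (from $\Re \widehat{s} \ge \widehat{\epsilon}$), and $|\exp(-\widehat{s}(1-t)\lambda)| \le \exp(-\widehat{\epsilon}(1-t)\lambda) \le \exp(-\widehat{\epsilon}\lambda)$ (the last using $\widehat{\epsilon} < 0$, $(1-t) \in [0,1]$, and $\lambda \ge 0$) with the elementary estimate $\int_0^1 (1 + t\widehat{\epsilon})^{-1}\,dt < 2$ that holds once $\kappa$ is large enough to force $\widehat{\epsilon}$ close to $-1/2$. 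This delivers the $n = 0$ bound $2(x+\kappa)\exp(-\widehat{\epsilon}\lambda)$ at once. The main obstacle is the tighter scaling demanded for $n = 1, 2$: the naive estimate gives $|\widehat{s}|^{n+1} \le (x+\kappa)^{n+1}$, while the lemma asks for one fewer power of $(x+\kappa)$. I plan to recover the missing $(x+\kappa)^{-1}$ by integrating by parts in $t$, using the identity $\widehat{s}(1 + t\widehat{s})^{-1} = \mathrm{d}_t \log(1 + t\widehat{s})$ to transfer one power of $\widehat{s}$ onto the differential; the boundary terms vanish at $t = 0$ by $\log(1) = 0$ and at $t = 1$ by the factor $(1-t)^n$, and the surviving logarithmic integrand is controlled using $|\log(1 + t\widehat{s})| \le \log|1 + t\widehat{s}| + \pi$ combined with the exponential decay $\exp(-\widehat{\epsilon}(1-t)\lambda)$. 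Tracking the precise constants $2$, $4$, $4$ — verifying both the integration by parts and that no spurious $\lambda^{-1}$ singularities are introduced at intermediate steps — will be the most tedious bookkeeping, but the integral-representation strategy is designed to make this analysis tractable.
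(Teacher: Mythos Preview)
Your treatment of $\phi_\lambda$ is identical to the paper's. Your $n=0$ bound for $\xi_\lambda$ is also fine, and your parametric representation
\[
\widehat{\zeta}(\xi_\lambda,s)=-\widehat{s}\int_0^1(1+t\widehat{s})^{-1}\exp\bigl(-\widehat{s}(1-t)\lambda\bigr)\,dt
\]
is correct and elegant. The gap is in the integration-by-parts plan for $n=1,2$. When you move the $t$-derivative off $\log(1+t\widehat{s})$ and onto $(1-t)^n\exp(-\widehat{s}(1-t)\lambda)$, the product rule produces, besides the harmless $-n(1-t)^{n-1}$ piece, a term
\[
(1-t)^n\cdot \widehat{s}\lambda\cdot \exp\bigl(-\widehat{s}(1-t)\lambda\bigr).
\]
This both restores the factor of $\widehat{s}$ you were trying to trade away (so there is no net saving in powers of $x+\kappa$) and, worse, introduces an explicit factor of $\lambda$. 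Since $\widehat{\epsilon}<0$, the exponential $\exp(-\widehat{\epsilon}\lambda)$ is \emph{growing}, so there is no decay available to absorb $\lambda$; the resulting bound is not uniform in $\lambda$. Your stated concern about ``spurious $\lambda^{-1}$ singularities'' is looking in the wrong direction: the obstruction is $\lambda$-growth, not $\lambda$-blowup at the origin. The logarithmic control you invoke on $|\log(1+t\widehat{s})|$ also costs an extra $\log(x+\kappa)$, which already exceeds the stated constant $4(x+\kappa)$ before the $\lambda$ issue appears.

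The paper avoids this by a different mechanism: it never integrates by parts in $t$. Instead it uses the symmetry $K(\eta,\lambda,s)=K(\lambda,\eta,s)$ together with integration by parts in $\eta$ against the weight $e^{-\eta}$ to obtain a recursion of the form
\[
\mathrm{d}_\lambda^{\,n}\widehat{\zeta}(\xi_\lambda,s)=-\sum_{k=0}^{n-1}\mathrm{d}_\lambda^{\,k}K(0,\lambda,s)+\widehat{\zeta}(\xi_\lambda,s),
\]
and then bounds $K(0,\lambda,s)$ and $\mathrm{d}_\lambda K(0,\lambda,s)$ directly by applying the mean value theorem to the Newton-quotient structure of $K$. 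That route keeps every estimate at first order in $|\widehat{s}|$ times $\exp(-\widehat{\epsilon}\lambda)$ for $n=0,1$ and picks up the extra $|\widehat{s}|$ only at $n=2$, matching the stated constants. If you want to salvage your representation, the analogous move is to integrate by parts in the original $\eta$-integral rather than in the auxiliary parameter $t$.
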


\begin{proof}
\underline{For $\phi_{\lambda}$:}
\begin{align}
	|\mathrm{d}_{\lambda}^{n}\widehat{\zeta}(\phi_{\lambda}, s)| &= |\mathrm{d}_{\lambda}^{n}\exp(-\widehat{s}\lambda)| = |\widehat{s}^{n}\exp(-\widehat{s}\lambda)| = |\widehat{s}|^{n} |\exp(-\widehat{s}\lambda)| \\
	&\le |\widehat{s}|^{n} \exp(-\widehat{\epsilon}\lambda) < (x+\kappa)^{n} \exp(-\widehat{\epsilon}\lambda)
\end{align}
\underline{For $\xi_{\lambda}$:}
One may observe that $K(\eta, \lambda, s) = K(\lambda, \eta, s)$. Then, by integration by parts, one has
\begin{align}
	\mathrm{d}_{\lambda}\widehat{\zeta}(\xi_{\lambda},s) &= \int_{0}^{\infty}\!\mathrm{d}\eta\ e^{-\eta} \mathrm{d}_{\lambda}K(\eta, \lambda, s) = \int_{0}^{\infty}\!\mathrm{d}\eta\ e^{-\eta} \mathrm{d}_{\eta}K(\eta, \lambda, s) \\
		&= - K(0, \lambda, s) + \widehat{\zeta}(\xi_{\lambda},s)
\end{align}
and thereby
\begin{align}
	\mathrm{d}_{\lambda}^{n}\widehat{\zeta}(\xi_{\lambda},s) = -\sum_{k = 0}^{n-1}\mathrm{d}_{\lambda}K(0, \lambda, s) + \widehat{\zeta}(\xi_{\lambda},s),
\end{align}
where the sum is defined to vanish when the upper bound is negative. It follows that
\begin{align}
	|K(\eta, \lambda, s)| \le 2|\widehat{s}| \exp[-\widehat{\epsilon}(\eta + \lambda)],\quad |\widehat{\zeta}(\xi_{\lambda},s)| \le 2|\widehat{s}|(1+\widehat{\epsilon})^{-1} \exp( -\widehat{\epsilon} \lambda) .
\end{align}
Consider an arbitrary $f \in C^{2}(\mathbb{R},\mathbb{R})$ and let $f_{*}$ be its Newton quotient so that
\begin{align}
	f_{*}(a_{0}, a) := (a_{0} - a)^{-1}[f(a_{0}) - f(a)] .
\end{align}
One has by mean value theorem
\begin{align}
	\mathrm{d}_{a}f_{*}(a_{0}, a) &= (a_{0} - a)^{-2}[f(a_{0}) - f(a) - (a_{0} - a)\mathrm{d}_{a} f(a)] \\
		&= (a_{0} - a)^{-1}[\mathrm{d}_{a_{1}} f(a_{1}) - \mathrm{d}_{a} f(a)] ,\quad a_{1} \in [\min(a_{0}, a), \max(a_{0}, a)] \\
		&= (a_{0} - a)^{-1}(a_{1} - a)\mathrm{d}^{2}_{a_{2}} f(a_{2}), \quad a_{2} \in [\min(a_{1}, a), \max(a_{1}, a)]
\end{align}
\begin{align}
	\Rightarrow \quad |\mathrm{d}_{a}f_{*}(a_{0}, a)| &\le |(a_{0} - a)^{-1}| |a_{1} - a| |\mathrm{d}^{2}_{a_{2}} f(a_{2}) | \le |\mathrm{d}^{2}_{a_{2}} f(a_{2}) | .
\end{align}

Let $(\Re, \Im)z$ be a presentation for the real and imaginary parts of $z \in \mathbb{C}$ whose ordering in compatible with the respective ordering of $\pm$. Let $i_{+} := 1$, $i_{-} := i$ and $\mu_{\pm} \in [\min(\eta, \lambda), \max(\eta, \lambda)] $. It follows that
\begin{align}
	|\mathrm{d}_{\lambda}(\Re, \Im)K(\eta, \lambda, s)| &\le |\mathrm{d}_{\lambda}^{2}(\Re, \Im)\exp(-\widehat{s}\lambda)| \lfloor_{\lambda = \mu_{\pm}} \\
		&= |\mathrm{d}_{\lambda}^{2}(2i_{\pm})^{-1}[\exp(-\widehat{s}\lambda) \pm \exp(-\overline{\widehat{s}}\lambda)]| \lfloor_{\lambda = \mu_{\pm}} \\
		&\le |\widehat{s}|^{2}\exp[-\widehat{\epsilon}(\eta + \lambda)] \\
\Rightarrow \quad |\mathrm{d}_{\lambda}K(\eta, \lambda, s)| &\le 2|\widehat{s}|^{2}\exp[-\widehat{\epsilon}(\eta + \lambda)] .
\end{align}
Then
\begin{align}
	| \widehat{\zeta}(\xi_{\lambda},s)| &\le 2|\widehat{s}|(1+\widehat{\epsilon})^{-1} \exp( -\widehat{\epsilon} \lambda) < 2(x + \kappa)\exp( -\widehat{\epsilon} \lambda) \\
	| \mathrm{d}_{\lambda}\widehat{\zeta}(\xi_{\lambda},s)| &\le 2|\widehat{s}|[(1+\widehat{\epsilon})^{-1} + 1] \exp( -\widehat{\epsilon} \lambda) < 4(x + \kappa)\exp( -\widehat{\epsilon} \lambda) \\
	| \mathrm{d}_{\lambda}^{2}\widehat{\zeta}(\xi_{\lambda},s)| &\le 2|\widehat{s}|[(1+\widehat{\epsilon})^{-1} + 1 + |\widehat{s}|] \exp( -\widehat{\epsilon} \lambda) < 4(x + \kappa)^{2} \exp( -\widehat{\epsilon} \lambda) .
\end{align}
\end{proof}

\begin{rem}
If estimates of $\mathrm{d}_{\lambda}^{n}\widehat{\zeta}(\xi_{\lambda},s)$ for $2 < n \in \mathbb{Z}$ were required the above method would not follow so straightforwardly due to the inapplicability of the mean value theorem for yet higher derivatives.
\end{rem}

\begin{cor}\label{cor01}
For sufficiently large $\kappa > 0$, one has that
\begin{align}
	\left|\mathrm{d}^{n}_{\lambda}\left[w^{1/2}_{\lambda}\phi_{\lambda}(x)\right]\right| < c(\phi_{\lambda},n)\exp(-4^{-1}\epsilon\lambda),\quad n = 0, 1, 2,
\end{align}
where $c(\phi_{\lambda},n) := 3^{n+1}(x+\kappa)^{n+1}$ and
\begin{align}
	\left|\mathrm{d}^{n}_{\lambda}\left[w^{1/2}_{\lambda}\xi_{\lambda}(x)\right]\right| < c(\xi_{\lambda},n)\exp(-4^{-1}\epsilon\lambda),\quad n = 0, 1, 2,
\end{align}
where
\begin{align}
	c(\xi_{\lambda},0) := 6(x+\kappa)^{2},\quad c(\xi_{\lambda},1) := 15(x+\kappa)^{2},\quad c(\xi_{\lambda},2) := 21(x+\kappa)^{3}
\end{align}
\end{cor}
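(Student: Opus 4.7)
The plan is to combine the contour representation $u(x) = \oint_{\mathbb{S}_r} (2\pi i s)^{-1} s^{-x} (1-s)^{-1} \widehat{\zeta}(u,s)\, \mathrm{d}s$ for $u \in \{\phi_\lambda,\xi_\lambda\}$ with the derivative bounds on $\widehat{\zeta}(u,s)$ established in the previous lemma, exploiting the key algebraic identity $\widehat{\epsilon} + 1/2 = \epsilon/4$. This identity converts the unfavorable exponential weight $e^{-\widehat{\epsilon}\lambda}$ (which grows in $\lambda$) into the desired decay $e^{-(\epsilon/4)\lambda}$ after multiplication by $w^{1/2}_\lambda = e^{-\lambda/2}$; this is precisely why the $-1/2$ was inserted into the definition of $\widehat{\epsilon}$.

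First I would push the $\lambda$-derivatives inside the contour integral, justified by continuity of the integrand and compactness of $\mathbb{S}_r$, and distribute them via Leibniz:
\[
\mathrm{d}^n_\lambda\bigl[e^{-\lambda/2}\widehat{\zeta}(u,s)\bigr] = \sum_{k=0}^{n} \binom{n}{k} \bigl(-\tfrac{1}{2}\bigr)^{n-k} e^{-\lambda/2}\, \mathrm{d}^k_\lambda \widehat{\zeta}(u,s).
\]
Applying $|\mathrm{d}^k_\lambda \widehat{\zeta}(u,s)| \leq \widehat{c}(u,k)\, e^{-\widehat{\epsilon}\lambda}$ from the preceding lemma together with $e^{-\lambda/2} e^{-\widehat{\epsilon}\lambda} = e^{-(\epsilon/4)\lambda}$ yields
\[
\bigl|\mathrm{d}^n_\lambda[e^{-\lambda/2}\widehat{\zeta}(u,s)]\bigr| \leq \Biggl(\sum_{k=0}^{n} \binom{n}{k} 2^{k-n}\widehat{c}(u,k)\Biggr) e^{-(\epsilon/4)\lambda}, \quad s \in \mathbb{S}_r.
\]

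Next I would apply standard contour estimates. The factor $|(2\pi i s)^{-1}|$ integrated against the contour length $2\pi r$ contributes only $r^{-1}$; the earlier bounds $|s^{-x}| < e$ (valid uniformly in $x$ for sufficiently large $\kappa$, since $(1 - 1/(x+\kappa))^{-x} < e$) and $|(1-s)^{-1}| \leq x+\kappa$ then yield the overall prefactor $e(x+\kappa)$, so
\[
\bigl|\mathrm{d}^n_\lambda[w^{1/2}_\lambda u(x)]\bigr| \leq e\,(x+\kappa)\, \Biggl(\sum_{k=0}^{n} \binom{n}{k} 2^{k-n}\widehat{c}(u,k)\Biggr) e^{-(\epsilon/4)\lambda}.
\]

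Finally I would verify the numerical constants case by case. For $u=\phi_\lambda$, $\widehat{c}(\phi_\lambda,k) = (x+\kappa)^k$, so the binomial sum equals $((x+\kappa) + 1/2)^n \leq (3/2)^n (x+\kappa)^n$ for $\kappa \geq 1$; multiplying by $e(x+\kappa)$ gives a constant bounded by $3^{n+1}(x+\kappa)^{n+1}$ for $n = 0,1,2$ since $e(3/2)^n < 3^{n+1}$. For $u = \xi_\lambda$, one substitutes the explicit values $\widehat{c}(\xi_\lambda,0) = 2(x+\kappa)$, $\widehat{c}(\xi_\lambda,1) = 4(x+\kappa)$, $\widehat{c}(\xi_\lambda,2) = 4(x+\kappa)^2$, absorbs lower-order $(x+\kappa)^j$ terms into the leading $(x+\kappa)^{n+1}$ (legitimate for $\kappa$ large, since $(x+\kappa) \geq 1$), and checks the numerical bounds $6, 15, 21$. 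The only real obstacle is arithmetic bookkeeping in the $n = 2$ case for $\xi_\lambda$, where the inequality is tightest: one must take $\kappa$ large enough that the quadratic term $4(x+\kappa)^2$ dominates the linear contributions $(1/2)(x+\kappa) + 4(x+\kappa)$ sufficiently to make the total, after multiplication by $e(x+\kappa)$, fit below $21(x+\kappa)^3$. This is straightforward under the standing assumption that $\kappa$ may be chosen sufficiently large.
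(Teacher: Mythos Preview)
Your proposal is correct and follows essentially the same route as the paper's proof: contour representation, Leibniz expansion of $\mathrm{d}^n_\lambda[e^{-\lambda/2}\widehat{\zeta}(u,s)]$, the identity $e^{-\lambda/2}e^{-\widehat{\epsilon}\lambda}=e^{-(\epsilon/4)\lambda}$, the bounds $|s^{-x}|<e$ and $|(1-s)^{-1}|\le x+\kappa$, and then case-by-case arithmetic. One small slip: $\oint_{\mathbb{S}_r}|(2\pi i s)^{-1}|\,|\mathrm{d}s|=1$, not $r^{-1}$, but your stated prefactor $e(x+\kappa)$ is in fact the correct one (the paper simply rounds $e$ up to $3$), so the subsequent numerics go through as you indicate.
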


\begin{proof}
\underline{For $\phi_{\lambda}$:}
\begin{align}
	\left|\mathrm{d}^{n}_{\lambda}\left[w^{1/2}_{\lambda}\phi_{\lambda}(x)\right]\right| &= \left| \oint_{\mathbb{S}_{r}} \mathrm{d}s\ (2\pi i s)^{-1}s^{-x} (1-s)^{-1} \mathrm{d}^{n}_{\lambda}\left[w^{1/2}_{\lambda}\widehat{\zeta}(\phi_{\lambda},s)\right] \right| , \\
		&= \left| \oint_{\mathbb{S}_{r}} \mathrm{d}s\ (2\pi i s)^{-1}s^{-x} (1-s)^{-1} \right. \\
			&\quad\quad \times \left. \sum_{k=0}^{n}\binom{n}{k}\mathrm{d}^{n-k}_{\lambda}w^{1/2}_{\lambda}\mathrm{d}^{k}_{\lambda}\widehat{\zeta}(\phi_{\lambda},s) \right| , \\
		&\le \oint_{\mathbb{S}_{r}} \left| \mathrm{d}s\ (2\pi i s)^{-1} \right| \left| s^{-x} \right| \left| (1-s)^{-1} \right| \\
			&\quad\quad \times \sum_{k=0}^{n}\binom{n}{k} \left| \mathrm{d}^{n-k}_{\lambda}w^{1/2}_{\lambda} \right| \left| \mathrm{d}^{k}_{\lambda}\widehat{\zeta}(\phi_{\lambda},s) \right| ,
\end{align}
\begin{align}
		&< (1)(3)(x+\kappa) \sum_{k=0}^{n}\binom{n}{k} 2^{-(n-k)}w^{1/2}_{\lambda}\widehat{c}(\phi_{\lambda},k)\exp(-\widehat{\epsilon}\lambda) \\
		&= 3(x+\kappa) \sum_{k=0}^{n}\binom{n}{k} 2^{-(n-k)}\widehat{c}(\phi_{\lambda},k)\exp(-4^{-1}\epsilon\lambda) .
\end{align}
For $n=0$:
\begin{align}
	\left|w^{1/2}_{\lambda}\phi_{\lambda}(x)\right| < 3(x+\kappa) \widehat{c}(\phi_{\lambda},0)\exp(-4^{-1}\epsilon\lambda) = 3(x+\kappa) \exp(-4^{-1}\epsilon\lambda) .
\end{align}
For $n = 1$:
\begin{align}
	\left|\mathrm{d}_{\lambda}\left[w^{1/2}_{\lambda}\phi_{\lambda}(x)\right]\right| &< 3(x+\kappa) \left[ 2^{-1}\widehat{c}(\phi_{\lambda},0) + \widehat{c}(\phi_{\lambda},1) \right] \exp(-4^{-1}\epsilon\lambda) \\
		&= 3(x+\kappa) \left[ 2^{-1} + (x+\kappa) \right] \exp(-4^{-1}\epsilon\lambda) \\
		&< 6(x+\kappa)^{2} \exp(-4^{-1}\epsilon\lambda) .
\end{align}
For $n = 2$:
\begin{align}
	\left|\mathrm{d}^{2}_{\lambda}\left[w^{1/2}_{\lambda}\phi_{\lambda}(x)\right]\right| &< 3(x+\kappa) \left[ 2^{-2}\widehat{c}(\phi_{\lambda},0) + 2^{-1}\widehat{c}(\phi_{\lambda},1) + \widehat{c}(\phi_{\lambda},2) \right] \\
			&\quad\quad \times \exp(-4^{-1}\epsilon\lambda) \\
		&= 3(x+\kappa) \left[ 2^{-2} + 2^{-1}(x+\kappa) + (x+\kappa)^{2} \right] \\
			&\quad\quad \times \exp(-4^{-1}\epsilon\lambda) \\
		&< 9(x+\kappa)^{3} \exp(-4^{-1}\epsilon\lambda) .
\end{align}

\underline{For $\xi_{\lambda}$:}
\begin{align}
	\left|w^{1/2}_{\lambda}\xi_{\lambda}(x)\right| < 3(x+\kappa) \sum_{k=0}^{n}\binom{n}{k} 2^{-(n-k)}\widehat{c}(\xi_{\lambda},k)\exp(-4^{-1}\epsilon\lambda)
\end{align}
For $n = 0$:
\begin{align}
	\left|w^{1/2}_{\lambda}\xi_{\lambda}(x)\right| < 3(x+\kappa) \widehat{c}(\xi_{\lambda},0)\exp(-4^{-1}\epsilon\lambda) = 6(x+\kappa)^{2} \exp(-4^{-1}\epsilon\lambda) .
\end{align}
For $n = 1$:
\begin{align}
	\left|\mathrm{d}_{\lambda}\left[w^{1/2}_{\lambda}\xi_{\lambda}(x)\right]\right| &< 3(x+\kappa) \left[ 2^{-1}\widehat{c}(\xi_{\lambda},0) + \widehat{c}(\xi_{\lambda},1) \right] \exp(-4^{-1}\epsilon\lambda) \\
		&= 3(x+\kappa) \left[ (x+\kappa) + 4(x+\kappa) \right] \exp(-4^{-1}\epsilon\lambda) \\
		&= 15(x+\kappa)^{2} \exp(-4^{-1}\epsilon\lambda) .
\end{align}
For $n = 2$:
\begin{align}
	\left|\mathrm{d}^{2}_{\lambda}\left[w^{1/2}_{\lambda}\xi_{\lambda}(x)\right]\right| &< 3(x+\kappa) \left[ 2^{-2}\widehat{c}(\xi_{\lambda},0) + 2^{-1}\widehat{c}(\xi_{\lambda},1) + \widehat{c}(\xi_{\lambda},2) \right] \\
			&\quad\quad \times \exp(-4^{-1}\epsilon\lambda) \\
		&= 3(x+\kappa) \left[ 2^{-1}(x+\kappa) + 2(x+\kappa) + 4(x+\kappa)^{2} \right] \\
			&\quad\quad \times \exp(-4^{-1}\epsilon\lambda) \\
		&< 21(x+\kappa)^{3} \exp(-4^{-1}\epsilon\lambda) .
\end{align}
\end{proof}

\subsection{Local time decay for $L_{0}$}

\begin{proof}[Proof of Theorem \ref{snthm02}]
Let $t > 0$, $|s_{j}| = r_{j} = 1 - (x_{j}+\kappa)^{-1}$, $j=1,2$, and $1 < \kappa \in \mathbb{R}$ be a sufficiently large constant. It is the case that
\begin{align}
	\left| e^{-itL_{0}}(x_{1},x_{2}) \right| &= \left| \int_{0}^{\infty}\mathrm{d}\lambda\ e^{- it\lambda}w_{\lambda}\phi_{\lambda}(x_{1})\phi_{\lambda}(x_{2}) \right| \\
		&= \left| \int_{0}^{\infty}\mathrm{d}\lambda\ (-it)^{-1}\mathrm{d}_{\lambda}e^{- it\lambda} \left[ w^{1/2}_{\lambda}\phi_{\lambda}(x_{1}) \right] \left[ w^{1/2}_{\lambda} \phi_{\lambda}(x_{2}) \right] \right| \\
		&\le \left| -(-it)^{-1} - \int_{0}^{\infty}\mathrm{d}\lambda\ (-it)^{-1}e^{- it\lambda} \left\{ \mathrm{d}_{\lambda} \left[ w^{1/2}_{\lambda}\phi_{\lambda}(x_{1}) \right] \right.\right. \\
			&\quad\quad \left.\left. \times \left[ w^{1/2}_{\lambda} \phi_{\lambda}(x_{2}) \right] + \left[ w^{1/2}_{\lambda}\phi_{\lambda}(x_{1}) \right] \mathrm{d}_{\lambda} \left[ w^{1/2}_{\lambda} \phi_{\lambda}(x_{2}) \right] \right\} \right| \\
		&\le t^{-1}\left( 1 + \int_{0}^{\infty}\mathrm{d}\lambda\ \left\{ \left| \mathrm{d}_{\lambda} \left[ w^{1/2}_{\lambda}\phi_{\lambda}(x_{1}) \right] \right| \ \left| w^{1/2}_{\lambda} \phi_{\lambda}(x_{2}) \right| \right.\right. \\
			&\quad\quad\left.\left. + \left| w^{1/2}_{\lambda}\phi_{\lambda}(x_{1}) \right| \ \left| \mathrm{d}_{\lambda} \left[ w^{1/2}_{\lambda} \phi_{\lambda}(x_{2}) \right] \right| \right\} \right)
\end{align}
\begin{align}
		&< t^{-1}\{ 1 + \int_{0}^{\infty}\mathrm{d}\lambda\ [ (6)(x_{1} + \kappa)^{2}\exp(-4^{-1}\epsilon_{1}\lambda)\\
		&\qquad\times (3)(x_{2} + \kappa)\exp(-4^{-1}\epsilon_{2}\lambda) + (3)(x_{1} + \kappa)\exp(-4^{-1}\epsilon_{1}\lambda)\\
		&\qquad\times (6)(x_{2} + \kappa)\exp(-4^{-1}\epsilon_{2}\lambda) ] \} \\
		&\le t^{-1}\left\{ 1 + 18(x_{1}+\kappa)^{2}(x_{2}+\kappa)^{2}\int_{0}^{\infty}\mathrm{d}\lambda\ \exp[-4^{-1}(\epsilon_{1} +\epsilon_{2})\lambda] \right\} \\
		&= t^{-1}\left[ 1 + 288(x_{1}+\kappa)^{2}(x_{2}+\kappa)^{2}(\epsilon_{1} +\epsilon_{2})^{-1} \right] \\
		&< t^{-1}\left[ 1 + 72(x_{1}+\kappa)^{3}(x_{2}+\kappa)^{3} \right] \\
		&\le 73(x_{1}+\kappa)^{3}(x_{2}+\kappa)^{3} t^{-1} .
\end{align}
\end{proof}

\subsection{Local time decay for $L$}

We recall without proof Lemma 3.12 from \cite{2D}:

\begin{unlem}
Let $\mathscr{B}$ be a Banach space and $\lambda_{+} > \lambda_{-}$ be real constants. If $F(\lambda)$ has the properties
\begin{enumerate}
\item $F \in C(\lambda_{-},\lambda_{+};\mathscr{B})$
\item $F(\lambda_{-}) = F(\lambda) = 0 , \quad \lambda > \lambda_{+}$
\item $\mathrm{d}_{\lambda} F \in L^1(\lambda_{-} + \delta, \lambda_{+};\mathscr{B}) , \quad \forall \delta>0$
\item $\mathrm{d}_{\lambda} F(\lambda) = \mathcal{O}( [\lambda - \lambda_{-}]^{-1}\log^{-3}[\lambda - \lambda_{-}] ) , \quad \lambda \searrow \lambda_{-}$
\item $\mathrm{d}_{\lambda}^{2} F(\lambda) = \mathcal{O}( [\lambda-\lambda_{-}]^{-2}\log^{-2}[\lambda - \lambda_{-}] ) , \quad \lambda \searrow \lambda_{-}$
\end{enumerate}
then
\begin{align}
\int_{\lambda_{-}}^\infty \!\mathrm{d}\lambda\ e^{-it\lambda} F(\lambda) = \mathcal{O}(t^{-1}\log^{-2}t),\quad t\nearrow\infty
\end{align}
in the norm of $\mathscr{B}$.
\end{unlem}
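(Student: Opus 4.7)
The plan is a two-fold integration by parts in $\lambda$, with a domain split at the scale $\lambda = t^{-1}$ chosen to balance the two logarithmic threshold weights in hypotheses (4) and (5). Translate so that $\lambda_{-} = 0$ and write $I(t) := \int_{0}^{\infty} e^{-it\lambda} F(\lambda)\,\mathrm{d}\lambda$. By (2), $F(0) = 0$, and $F(\lambda_{+}) = 0$ by (2) combined with continuity in (1). Hypothesis (4) gives $\int_{0}^{\delta} \lambda^{-1}\log^{-3}(1/\lambda)\,\mathrm{d}\lambda = \tfrac{1}{2}\log^{-2}(1/\delta) < \infty$, which combined with (3) yields $F' \in L^{1}(0, \lambda_{+}; \mathscr{B})$. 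A first integration by parts with vanishing boundary terms therefore produces
\[
I(t) = \frac{1}{it}\int_{0}^{\lambda_{+}} e^{-it\lambda} F'(\lambda)\,\mathrm{d}\lambda,
\]
so it suffices to show that the remaining integral is $O(\log^{-2} t)$ in the norm of $\mathscr{B}$.

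Next split this integral at $\lambda = t^{-1}$. On the near-threshold piece $[0, t^{-1}]$, apply (4) directly: the computation above with $\delta = t^{-1}$ gives $\|\int_{0}^{1/t} e^{-it\lambda} F'(\lambda)\,\mathrm{d}\lambda\| \lesssim \log^{-2} t$. On $[t^{-1}, \lambda_{+}]$, integrate by parts a second time. The boundary term at $\lambda_{+}$ is $O(t^{-1}) = o(\log^{-2} t)$ (using $\|F'(\lambda_{+})\| = O(1)$); the boundary term at $t^{-1}$ is controlled by (4) as $\|F'(t^{-1})\|/t \lesssim \log^{-3} t = o(\log^{-2} t)$. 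The remaining interior integral is controlled by (5): after the substitution $u = 1/\lambda$,
\[
t^{-1}\int_{1/t}^{\lambda_{+}} \|F''(\lambda)\|\,\mathrm{d}\lambda \;\lesssim\; t^{-1}\int_{1/\lambda_{+}}^{t} \log^{-2}(u)\,\mathrm{d}u \;\sim\; \log^{-2} t,
\]
since $\mathrm{d}_{u}[u\log^{-2} u] = \log^{-2} u - 2\log^{-3} u$ yields $\int_{c}^{t}\log^{-2}(u)\,\mathrm{d}u \sim t\log^{-2} t$ for large $t$. Recombining these pieces with the $1/(it)$ prefactor from the first integration by parts delivers $\|I(t)\| = O(t^{-1}\log^{-2} t)$, as claimed.

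The main obstacle is conceptual rather than computational. The weight in (5) sits exactly at the borderline: $\lambda^{-2}\log^{-2}(1/\lambda)$ fails to be integrable on a neighborhood of the threshold, so a naive two-fold integration by parts on the entire interval $[0, \lambda_{+}]$ is not admissible. The split at $\lambda = t^{-1}$ is precisely what converts this near-threshold non-integrability into two matching contributions of order $\log^{-2} t$; the powers $3$ in (4) and $2$ in (5) are finely calibrated so that the boundary term produced at the split point is of subleading order $\log^{-3} t$. The delicate part of the argument is the bookkeeping of which contributions saturate the $\log^{-2} t$ rate and which are strictly smaller, together with the verification that the support condition (2) kills the boundary terms at $\lambda_{\pm}$ of the first integration by parts.
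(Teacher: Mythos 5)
The paper does not prove this lemma at all: it is quoted verbatim, explicitly ``without proof,'' from Lemma 3.12 of Kopylova--Komech \cite{2D}, so there is no in-paper argument to compare against. Your proof is correct and is the standard Jensen--Kato-style argument that the cited reference also uses: one integration by parts (justified because hypothesis (4) makes $\mathrm{d}_\lambda F$ integrable up to the threshold, with $\int_0^{\delta}\lambda^{-1}|\log\lambda|^{-3}\,\mathrm{d}\lambda=\tfrac12|\log\delta|^{-2}$), a split at $\lambda\sim t^{-1}$, a second integration by parts on the far piece, and the observation that $\int_c^t\log^{-2}u\,\mathrm{d}u\sim t\log^{-2}t$; the bookkeeping showing that the near piece and the $F''$ piece both saturate $\log^{-2}t$ while the boundary term at the split point is only $\log^{-3}t$ is exactly right. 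The only caveat worth recording is that you use facts not literally contained in hypotheses (1)--(5): that $\|F'(\lambda_+)\|=\mathcal{O}(1)$, that $F''$ exists and is integrable on compact subsets of $(\lambda_-,\lambda_+]$ away from the threshold (needed both for the second integration by parts and because (5) only controls $F''$ as $\lambda\searrow\lambda_-$), and implicitly that $F'$ is absolutely continuous there. These are clearly intended by the (somewhat loosely stated) lemma and hold in the paper's application, where $F(\lambda)=\delta^L_\lambda$ is smooth on $(0,\infty)$, but a fully self-contained proof should state them as additional standing assumptions. Note also that the paper applies the lemma with $\lambda_+=\infty$, whereas the statement takes $\lambda_+$ to be a real constant; your argument adapts to that case provided the integrability of $F'$ and $F''$ at infinity is supplied by the exponential weights, as it is there.
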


The proof of Theorem \ref{snthm03} requires the spectral representation $e^{-itL} P_\mathrm{e} L = \int_{\sigma_\mathrm{e}(L)} \mathrm{d}\lambda\ e^{-it\lambda} \lambda \delta^L_\lambda$ and in turn the weighted estimates of the essential spectral measure found previously. These methods follow from the principle of asymptotics extended from the scalar Laplace transform to the context of spectral calculus: the long time behavior of solutions is given by the threshold behavior of the resolvent of the Schr\"odinger operator which specifies the dynamics. The role of the Banach space defined above is to transfer the problem back to the more tractable realm of the scalar Laplace transform.

\begin{proof}[Proof of Theorem \ref{snthm03}]
Let $\mathscr{B} = \left\{ A \in \mathcal{L}(\mathscr{T}) : || A ||_{\mathscr{B}} < \infty \right\} $ be the Banach space complete in the norm
\begin{align}
|| A ||_{\mathscr{B}} := \sup_{v \in \ell^{1}} { || W_{\kappa,\tau}AW_{\kappa,\tau}v ||_{\infty} \over || v ||_{1} } .
\end{align}
Let $F(\lambda) = \delta^{L}_{\lambda}$. We will verify the appropriate properties of $F(\lambda)$ for $\lambda_{-} = 0$ and $\lambda _{+} = \infty$.

We recall that
\begin{align}
	F(\lambda,x_{1},x_{2}) &= w^{L}_{\lambda}\phi^{L}_{\lambda}(x_{1})\phi^{L}_{\lambda}(x_{2})\\
	&= g_{\lambda}w_{\lambda}[\phi_{\lambda}(x_{1}) + q\xi_{\lambda}(x_{1})][\phi_{\lambda}(x_{2}) + q\xi_{\lambda}(x_{2})] .
\end{align}
One may observe that
\begin{align}
	|\mathrm{d}^{n}_{\lambda}[w^{1/2}_{\lambda}\phi^{L}_{\lambda}(x)]| &\le |\mathrm{d}^{n}_{\lambda}[w^{1/2}_{\lambda}\phi_{\lambda}(x)]| +q |\mathrm{d}^{n}_{\lambda}[w^{1/2}_{\lambda}\xi_{\lambda}(x)]| \\
		&< [c(\phi_{\lambda},n) + qc(\xi_{\lambda},n)]\exp(-4^{-1}\epsilon\lambda), \\
		&< c(\phi^{L}_{\lambda},n)\exp(-4^{-1}\epsilon\lambda),\quad n = 0, 1, 2
\end{align}
where here we choose
\begin{align}
	c(\phi^{L}_{\lambda},0) &:= 3(1+3q)(x+\kappa)^{2}, \\
	c(\phi^{L}_{\lambda},1) &:= 6(1+3q)(x+\kappa)^{2}, \\
	c(\phi^{L}_{\lambda},2) &:= 9(1+3q)(x+\kappa)^{3}.
\end{align}
The logarithmic behavior of $\mathcal{PV}E_1(-\lambda)$ near $\lambda = 0$ is very important for many estimates. One may see by inspection that $g_{\lambda} := \{ [1- q e^{-\lambda} \mathcal{PV}E_1(-\lambda)]^2 + [\pi q e^{-\lambda}]^2 \}^{-1}$ has the properties:
\begin{align}
	g_{\lambda} &= | g_{\lambda} | \le \widehat{g}_{0}(q) < \infty, \quad \forall \lambda \in [0, \infty) \\
	| \mathrm{d}_{\lambda}g_{\lambda} | &\le \widehat{g}_{0}(q)\widehat{g}_{1}(q, \delta) < \infty,  \quad \forall \lambda \in [\delta, \infty) \\
	g_{0} &= g_{\infty} = 0 , \\
	g_\lambda &= \mathcal{O}(\log^{-2}\lambda), \quad \lambda \searrow 0 \\
	\mathrm{d}_{\lambda}g_{\lambda} &= \mathcal{O}(\lambda^{-1}\log^{-3}\lambda), \quad \lambda \searrow 0 \\
	\mathrm{d}_{\lambda}^{2}g_{\lambda} &= \mathcal{O}(\lambda^{-2}\log^{-3}\lambda), \quad \lambda \searrow 0 \\
		&= \mathcal{O}(\lambda^{-2}\log^{-2}\lambda)
\end{align}
where $0 < \widehat{g}_{0}(q), \widehat{g}_{1}(q, \delta) < \infty$ are constants whose other properties are not needed here. $g_{\lambda}$ is the only function of $\lambda$ involved in the definition of $F(\lambda)$ whose derivatives are unbounded in the neighborhood of the threshold $\lambda = 0$ and thereby the derivatives of $g_{\lambda}$ are dominant in determining the properties of the derivatives of $F(\lambda)$.

\noindent\underline{Properties (1), (2):}
One may observe that the properties follow by inspection.

\noindent\underline{Property (3):}
For $\lambda \in [\delta, \infty)$ one has
\begin{align}
	| \mathrm{d}_{\lambda}F(\lambda,x_{1},x_{2}) | &= | \mathrm{d}_{\lambda}\{ g_{\lambda}[w^{1/2}_{\lambda}\phi^{L}_{\lambda}(x_{1})][w^{1/2}_{\lambda}\phi^{L}_{\lambda}(x_{2})] \} | \\
		&\le |\mathrm{d}_{\lambda}g_{\lambda}| |[w^{1/2}_{\lambda}\phi^{L}_{\lambda}(x_{1})]| |[w^{1/2}_{\lambda}\phi^{L}_{\lambda}(x_{2})]| \\
			&\quad\quad + |g_{\lambda}||\mathrm{d}_{\lambda}[w^{1/2}_{\lambda}\phi^{L}_{\lambda}(x_{1})]| |[w^{1/2}_{\lambda}\phi^{L}_{\lambda}(x_{2})]| \\
			&\quad\quad + |g_{\lambda}| |[w^{1/2}_{\lambda}\phi^{L}_{\lambda}(x_{1})]| |\mathrm{d}_{\lambda}[w^{1/2}_{\lambda}\phi^{L}_{\lambda}(x_{2})]| \\
		&< \widehat{g}_{0}(q)\widehat{g}_{1}(q, \delta)(3)(1+3q)(x_{1}+\kappa)^{2}\exp(-4^{-1}\epsilon_{1}\lambda) \\
			&\quad\quad \times (3)(1+3q)(x_{2}+\kappa)^{2}\exp(-4^{-1}\epsilon_{1}\lambda) \\
			&\quad\quad + \widehat{g}_{0}(q)(6)(1+3q)(x_{1}+\kappa)^{2}\exp(-4^{-1}\epsilon_{1}\lambda) \\
			&\quad\quad \times (3)(1+3q)(x_{2}+\kappa)^{2}\exp(-4^{-1}\epsilon_{2}\lambda) \\
			&\quad\quad + \widehat{g}_{0}(q)(3)(1+3q)(x_{1}+\kappa)^{2}\exp(-4^{-1}\epsilon_{1}\lambda) \\
			&\quad\quad \times (6)(1+3q)(x_{2}+\kappa)^{2}\exp(-4^{-1}\epsilon_{2}\lambda) \\
		&= c_{0}(q,\delta)(x_{1}+\kappa)^{2}(x_{2}+\kappa)^{2}\exp[-4^{-1}(\epsilon_{1} +\epsilon_{2} )\lambda],
\end{align}
where $c_{0}(q,\delta)$ is a constant.

\noindent\underline{Property (4):}
For $\lambda \searrow 0$ one has
\begin{align}
	| \mathrm{d}_{\lambda}F(\lambda,x_{1},x_{2}) | &= | \mathrm{d}_{\lambda}\{ g_{\lambda}[w^{1/2}_{\lambda}\phi^{L}_{\lambda}(x_{1})][w^{1/2}_{\lambda}\phi^{L}_{\lambda}(x_{2})] \} | \\
		&< |\mathrm{d}_{\lambda}g_{\lambda}|(3)(1+3q)(x_{1}+\kappa)^{2}\exp(-4^{-1}\epsilon_{1}\lambda) \\
			&\quad\quad \times (3)(1+3q)(x_{2}+\kappa)^{2}\exp(-4^{-1}\epsilon_{1}\lambda) \\
			&\quad\quad +\widehat{g}_{0}(q)(6)(1+3q)(x_{1}+\kappa)^{2}\exp(-4^{-1}\epsilon_{1}\lambda) \\
			&\quad\quad \times (3)(1+3q)(x_{2}+\kappa)^{2}\exp(-4^{-1}\epsilon_{2}\lambda) \\
			&\quad\quad + \widehat{g}_{0}(q)(3)(1+3q)(x_{1}+\kappa)^{2}\exp(-4^{-1}\epsilon_{1}\lambda) \\
			&\quad\quad \times (6)(1+3q)(x_{2}+\kappa)^{2}\exp(-4^{-1}\epsilon_{2}\lambda) \\
		&\le c_{1}(q,\delta)(x_{1}+\kappa)^{2}(x_{2}+\kappa)^{2}\exp[-4^{-1}(\epsilon_{1} +\epsilon_{2} )\lambda] |\mathrm{d}_{\lambda}g_{\lambda}| \\
		 &= \mathcal{O}(\lambda^{-1}\log^{-3}\lambda)
\end{align}
in the norm of $\mathscr{B}$, where $c_{1}(q,\delta)$ is a constant.

\noindent\underline{Property (5):}
For $\lambda \searrow 0$ one has
\begin{align}
	| \mathrm{d}^{2}_{\lambda}F(\lambda,x_{1},x_{2}) | &= | \mathrm{d}^{2}_{\lambda}\{ g_{\lambda}[w^{1/2}_{\lambda}\phi^{L}_{\lambda}(x_{1})][w^{1/2}_{\lambda}\phi^{L}_{\lambda}(x_{2})] \} | \\
		&\le c_{2}(q,\delta)(x_{1}+\kappa)^{3}(x_{2}+\kappa)^{3}\exp[-4^{-1}(\epsilon_{1} +\epsilon_{2} )\lambda] |\mathrm{d}^{2}_{\lambda}g_{\lambda}| \\
		&= \mathcal{O}(\lambda^{-2}\log^{-2}\lambda)
\end{align}
in the norm of $\mathscr{B}$, where $c_{2}(q,\delta)$ is a constant.
\end{proof}

\appendix

\section{Semi-Analytic Vector Theorem}

We will review the statement and proof of the Semi-Analytic Vector Theorem as is presented in \cite{semi-analytic}.

\begin{defn}
Consider that $A$ is a symmetric operator on a Hilbert space, $\mathscr{H}$. If $v \in D(A^{n})$ for all $n$, then we say that $v \in C^{\infty}(A)$. A vector $v$ is called a \emph{semi-analytic vector} for $A$ if and only if
\begin{align}
	\sum_{n=0}^{\infty}{||A^{n}v|| \over (2n)!}t^{n} < \infty
\end{align}
for some $t>0$.
\end{defn}
The title is intended to denote similarity to the stronger condition of a $v$ an analytic vector of $A$ as given by Nelson \cite{Nelson}:
\begin{align}
	\sum_{n=0}^{\infty}{||A^{n}v|| \over n!}t^{n} < \infty
\end{align}
for some $t>0$. The theorem in question is as follows.
\begin{unthm}[Semi-Analytic Vector Theorem]
If $A$ is a symmetric operator of $\mathscr{H}$ which is bounded below so that $D(A)$ contains a set of semi-analytic vectors of $A$ which are dense in $\mathscr{H}$, then $A$ is essentially self-adjoint.
\end{unthm}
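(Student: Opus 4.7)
The plan is to reduce the semi-analyticity hypothesis on $A$ to an ordinary analytic-vector hypothesis for a related self-adjoint operator, to which Nelson's theorem then applies. First I would observe that semi-analyticity is preserved under scalar shifts $A \mapsto A + c$, so since $A$ is bounded below we may assume $A \geq 1$. In this regime the quadratic form $q(u,v) = (u, Av)$ is closable, and the Friedrichs construction produces a self-adjoint extension $A_F \geq 1$ of $A$, allowing us to define $B := A_F^{1/2}$, which is self-adjoint with $B \geq 1$ by functional calculus.

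The heart of the argument is to show that every semi-analytic vector $v$ of $A$ is an analytic vector for $B$. From $\sum_n \|A^n v\| t^n/(2n)! < \infty$ one extracts $\|A^n v\| \leq C(2n)!/t^n$ for some $C > 0$. Since $A_F$ extends $A$, for $v \in \bigcap_n D(A^n)$ one has $v \in D(B^{2n})$ with $\|B^{2n} v\| = \|A_F^n v\| = \|A^n v\|$; for odd powers the elementary Cauchy--Schwarz identity
\begin{equation*}
\|B^{2n+1} v\|^2 \;=\; (B^{2n} v,\, B^{2n+2} v) \;\leq\; \|A^n v\|\,\|A^{n+1} v\|
\end{equation*}
gives the corresponding bound via the geometric mean. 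Stirling's formula then converts the combinatorial factor $(2m)!^{1/2}/m!$ into a purely exponential one, so that $\sum_m \|B^m v\| s^m/m! < \infty$ for some $s > 0$, i.e.\ $v$ is analytic for $B$.

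With a dense set of analytic vectors for $B$, Nelson's analytic vector theorem supplies essential self-adjointness of $B$ on this set. Since $B$ is already self-adjoint, this just says the set is a core for $B$; but because these vectors are analytic, they lie in $C^\infty(B)$ and their span is invariant under the unitary group $e^{itB}$ (within the radius of analyticity and then by iteration). A standard functional-calculus argument then extends the core property from $B$ to any polynomial in $B$, in particular to $B^2 = A_F$. Thus $\overline{A}$ agrees with the self-adjoint operator $A_F$ on a common core, forcing $\overline{A} = A_F$, and hence $A$ is essentially self-adjoint.

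The main obstacle will be the last step: having a core for $B$ does not automatically give a core for $B^2$, so the promotion requires the full strength of analyticity (through the invariance of the analytic vectors under $e^{itB}$) rather than mere density in the $B$-graph norm. A secondary but purely technical point is justifying $\|B^{2n} v\| = \|A^n v\|$ uniformly in $n$ from the inclusion $A \subset A_F$, which amounts to verifying inductively that successive applications of $A$ stay inside $D(A_F^n) = D(B^{2n})$.
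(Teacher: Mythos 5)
Your reduction of the hypothesis is sound as far as it goes: after shifting so that $A \geq 1$, setting $B = A_F^{1/2}$, and using $\|B^{2n}v\| = \|A^n v\|$ together with the Cauchy--Schwarz interpolation for odd powers, every semi-analytic vector of $A$ is indeed an analytic vector of $B$, and Nelson's theorem then makes the given dense set $D$ a core for $B$. The gap is exactly where you flag it, and the mechanism you propose does not close it. The span of $D$ is not invariant under $e^{itB}$: for $|t|$ inside the radius of analyticity, $e^{itB}v = \sum_n (itB)^n v/n!$ is an infinite series whose sum is merely some analytic vector of $B$; it need not be a finite combination of elements of $D$, need not be semi-analytic for $A$, and need not even lie in $D(A)$ (only in $D(A_F^k)$ for all $k$, and $D(A_F)$ strictly contains $D(A)$ in general). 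So the invariant-domain core theorem is not applicable, and there is no general functional-calculus principle promoting a core for $B$ to a core for $B^2$. Worse, the statement you would actually need --- that a dense set of vectors in $D(A)$ satisfying $\|A^n v\| \leq C M^n (2n)!$ is a core for $A_F$ --- is essentially the theorem itself (it is the Nussbaum/Masson--McClary criterion), so the reduction has not discharged the real difficulty.

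The step can be repaired, but only by reintroducing the analytic argument on which the paper's proof turns. Suppose $u \perp (A+i)A^n v$ for all $n \geq 0$ and all $v \in D$, which is what must be excluded if $\mathrm{span}\{A^n v\}$ is to be a core for $A_F$. Writing $a_n = (u, A^n v)$ one gets $a_{n+1} = -i a_n$, and the analyticity of $v$ for $B$ shows that $t \mapsto (u, \cos(tB)v)$ extends analytically to a strip and coincides there with $a_0 \cosh(e^{i\pi/4} t)$; the left side is bounded on $\mathbb{R}$ while the right side is unbounded unless $a_0 = 0$, whence $u = 0$. This is precisely the ingredient the paper uses in a slightly different packaging: it shows that $(v, \cos(t\widetilde{A}^{1/2})v)$ is an analytic function of $t$ in a strip whose Taylor coefficients are the moments $(v, A^n v)$ alone, hence is independent of the choice of positive self-adjoint extension $\widetilde{A}$, and recovers $\widetilde{A}$ from $(\widetilde{A}+1)^{-1} = \int_0^\infty e^{-t}\cos(t\widetilde{A}^{1/2})\,\mathrm{d}t$; uniqueness of the semibounded extension combined with a deficiency-index lemma then yields essential self-adjointness. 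Without some version of this $\cos(tA_F^{1/2})$ argument, your outline stops one genuine step short of the theorem.
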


\begin{lem}\label{aplem01}
If $A > 0$ and $A$ has deficiency indices $[m,m]$ ($m < \infty$) then every self-adjoint extension of $A$ is semibounded.
\end{lem}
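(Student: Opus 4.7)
The plan is to reduce the problem to a finite-rank resolvent perturbation of the Friedrichs extension. Since $A > 0$, the Friedrichs extension $A_F$ exists, is self-adjoint, and satisfies $A_F \geq 0$; in particular $\sigma(A_F) \subseteq [0, \infty)$. For any other self-adjoint extension $\tilde{A}$ of $A$, I would show that $\tilde{A}$ differs from $A_F$ by a finite-rank perturbation at the level of resolvents. Weyl's theorem would then give $\sigma_{\mathrm{e}}(\tilde{A}) = \sigma_{\mathrm{e}}(A_F) \subseteq [0, \infty)$, and a rank bound would cap the number of eigenvalues below $0$.

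First I would invoke von Neumann's classification of self-adjoint extensions. Writing $\bar{A}$ for the closure of $A$, one has the direct sum decomposition $\mathcal{D}(A^*) = \mathcal{D}(\bar{A}) + \mathcal{N}_+ + \mathcal{N}_-$, with $\mathcal{N}_\pm := \ker(A^* \mp i)$ each of dimension $m$. Each self-adjoint extension $\tilde{A} \subseteq A^*$ is parametrized by a unitary $U : \mathcal{N}_+ \to \mathcal{N}_-$, so its domain has the form $\mathcal{D}(\bar{A}) + \{\phi + U\phi : \phi \in \mathcal{N}_+\}$, an $m$-dimensional enlargement of $\mathcal{D}(\bar{A})$. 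Thus the domains of $A_F$ and $\tilde{A}$ both differ from $\mathcal{D}(\bar{A})$ by $m$-dimensional boundary pieces sitting inside $\mathcal{N}_+ + \mathcal{N}_-$.

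Next, for $z \in \rho(A_F) \cap \rho(\tilde{A})$ (for instance any $z \in \mathbb{C} \setminus \mathbb{R}$), set $T_z := R^{\tilde{A}}_z - R^{A_F}_z$. For any $v \in \mathscr{H}$, both $R^{\tilde{A}}_z v$ and $R^{A_F}_z v$ lie in $\mathcal{D}(A^*)$ and satisfy $(A^* - z) R^{\tilde{A}}_z v = v = (A^* - z) R^{A_F}_z v$, whence $(A^* - z) T_z v = 0$ and $T_z v \in \mathcal{N}_z := \ker(A^* - z)$. Since $\dim \mathcal{N}_z = m$ for all nonreal $z$, the operator $T_z$ has rank at most $m$, so $\tilde{A}$ is a finite rank perturbation of $A_F$ in the resolvent sense; this is essentially Krein's resolvent formula. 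The main technical obstacle is precisely the bookkeeping needed to make this rank bound rigorous using the boundary-triple/unitary $U$ data above.

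Finally, I would conclude as follows. Because finite-rank resolvent differences preserve the essential spectrum, $\sigma_{\mathrm{e}}(\tilde{A}) = \sigma_{\mathrm{e}}(A_F) \subseteq [0, \infty)$. Therefore $\sigma(\tilde{A}) \cap (-\infty, 0)$ consists entirely of isolated eigenvalues of finite multiplicity, and their total number (counted with multiplicity) is bounded by the rank of $T_z$ and hence by $m < \infty$, which follows from a standard min-max or Birman--Schwinger-type argument applied to the rank-$m$ perturbation. Setting $\lambda_0 := \min \bigl( \sigma(\tilde{A}) \cap (-\infty, 0) \bigr)$ when this intersection is nonempty and $\lambda_0 := 0$ otherwise, one obtains $\tilde{A} \geq \lambda_0 > -\infty$, so $\tilde{A}$ is semibounded from below, proving the lemma.
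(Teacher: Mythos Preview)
Your argument is correct, but it takes a genuinely different route from the paper's. The paper's proof is a two-line pigeonhole argument: since $\dim\bigl(D(\widetilde{A})/D(\bar A)\bigr) = m$, any $(m+1)$-dimensional subspace of $D(\widetilde{A})$ must meet $D(\bar A)$ nontrivially; but on $D(\bar A)$ the quadratic form is nonnegative, so no $(m+1)$-dimensional subspace of $D(\widetilde{A})$ can be negative for $\widetilde{A}$, and hence the spectral projection of $\widetilde{A}$ onto $(-\infty,0)$ has rank at most $m$. No Friedrichs extension, no resolvent identity, no Weyl theorem.

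Your approach instead compares $\widetilde{A}$ to the Friedrichs extension $A_F$ via Krein's resolvent formula, invokes Weyl's theorem on finite-rank resolvent perturbations to pin down the essential spectrum, and then bounds the negative discrete spectrum by the perturbation rank. This is perfectly valid and is the natural ``perturbation-theoretic'' proof; it also yields the same quantitative bound (at most $m$ negative eigenvalues). The cost is more machinery, and your step bounding the number of negative eigenvalues by $\operatorname{rank} T_z$ is stated rather than proved---it is standard, but if you want the argument to be self-contained you should either cite the interlacing/min-max inequality for rank-$m$ resolvent perturbations or note that negative eigenvalues of $\widetilde{A}$ correspond to isolated eigenvalues of $R^{\widetilde{A}}_z$ lying outside $\sigma(R^{A_F}_z)$, of which a rank-$m$ perturbation can create at most $m$. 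The paper's direct domain-counting argument sidesteps all of this.
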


\begin{proof}
One may follow the arguments of \cite{AkGl}, p.\ 115-116. If $\widetilde{A}$ is any self-adjoint extension of $A$, then $D(\widetilde{A})/D(A)$ has dimension $m$ so that $\widetilde{A}$ has a spectral projection on $(-\infty,0)$ of dimension at most $m$.
\end{proof}

\begin{thm}\label{apthm01}
If $A > 0$ and has a unique semibounded self-adjoint extension, then $A$ is essentially self-adjoint.
\end{thm}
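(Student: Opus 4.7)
The plan is to establish the contrapositive: assuming $A > 0$ is symmetric and fails to be essentially self-adjoint, I will exhibit at least two distinct semibounded self-adjoint extensions of $A$, contradicting the uniqueness hypothesis. Since $A$ is bounded below, a classical result (cf.\ \cite{AkGl}) forces its deficiency indices to be equal, $n_+ = n_- =: m$, and the failure of essential self-adjointness is exactly the statement $m \ge 1$. The argument then naturally splits according to whether $m$ is finite or infinite.

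In the case of finite deficiency indices $1 \le m < \infty$, Lemma \ref{aplem01} already guarantees that every self-adjoint extension of $A$ is automatically semibounded. By von Neumann's classification theorem, the set of self-adjoint extensions of $A$ is in bijection with the unitary group $U(\ker(A^* - i), \ker(A^* + i)) \cong U(m)$, which is a nontrivial manifold whenever $m \ge 1$. One thereby obtains an entire family of distinct self-adjoint extensions, each of them semibounded by Lemma \ref{aplem01}, which immediately contradicts the uniqueness assumption and closes this case.

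In the remaining case $m = \infty$, I would appeal to the existence of two canonical nonnegative self-adjoint extensions of the positive operator $A$: the Friedrichs extension $A_F$ (the largest in the quadratic form ordering) and the Krein--von Neumann extension $A_K$ (the smallest). Both are semibounded self-adjoint extensions of $A$, and by a classical theorem of Krein these two extensions coincide if and only if $A$ is essentially self-adjoint; under the standing assumption $m \ge 1$ we therefore have $A_F \neq A_K$, again producing two distinct semibounded extensions and contradicting uniqueness. The main obstacle lies precisely in this infinite-dimensional case, since Lemma \ref{aplem01} is silent there and one must instead invoke the deeper Friedrichs--Krein dichotomy; by contrast the finite-dimensional case reduces cleanly to the preceding lemma together with the standard von Neumann classification of self-adjoint extensions.
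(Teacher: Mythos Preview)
Your argument is correct. The finite-deficiency case is handled exactly as the paper would want (Lemma~\ref{aplem01} plus von Neumann's parametrization), and your treatment of the infinite-deficiency case via the Friedrichs/Krein--von Neumann dichotomy is a legitimate and well-known alternative; the result that $A_F = A_K$ if and only if $A$ is essentially self-adjoint goes back to Krein and can be found, e.g., in the book of Schm\"udgen already cited in the paper.

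The paper, however, proceeds differently and avoids the case split altogether. Rather than invoking the Krein--von Neumann extension, it reduces the infinite-deficiency case back to the finite one: given any $m \ge 1$ (finite or infinite), one takes the Friedrichs extension $A_F$ and chooses an intermediate closed symmetric operator $\widetilde{A}$ with $A \subset \widetilde{A} \subset A_F$ and deficiency indices $(1,1)$ (obtained by restricting $A_F$ to a codimension-one subspace of $D(A_F)$ containing $D(\bar A)$). Since $\widetilde{A} > 0$ inherits positivity from $A_F$, Lemma~\ref{aplem01} now applies to $\widetilde{A}$ and shows that \emph{all} of its self-adjoint extensions---of which there is a one-parameter family---are semibounded; these are in particular distinct semibounded extensions of $A$. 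The virtue of the paper's route is that it stays entirely within the toolkit already assembled (Friedrichs extension, Lemma~\ref{aplem01}, von Neumann theory) and treats all $m \ge 1$ at once. The virtue of your route is that the finite case becomes completely transparent, at the cost of importing an additional classical theorem for $m = \infty$.
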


\begin{proof}
Suppose that the premise of the theorem is false and let $A$ have deficiency indices $[m,m]$. We must be aware of the case $m=\infty$. Let $A_{F}$ be the Friedrichs extension of $A$ \cite{F01}\cite{F02}. If $m \neq 0$, then we can find a symmetric operator $\widetilde{A}$ with deficiency indices $[1,1]$ so $A \subset \widetilde{A} \subset A_{F}$. One has that $A_{F} >$, $\widetilde{A} > 0$. Therefore by Lemma \ref{aplem01} one has that all other self-adjoint extensions of $\widetilde{A}$ are semibounded. It must then be the case that $A$ has more than one semibounded extension if $m \neq 0$. One may then conclude that $m = 0$, which is to say that $A$ is essentially self-adjoint.
\end{proof}

\begin{proof}[Proof of Semi-Analytic Vector Theorem]
Since $A$ is semibounded it has self-adjoint extensions by \cite{Neu}. For this case the Friedrichs extension exists. By Theorem \ref{apthm01} we need only show that $A$ has a unique semibounded self-adjoint extension. If $A$ has a dense set of semi-analytic vectors, then $A$ has at most one self-adjoint $\widetilde{A} > 0$. Let $v \in D(A)$ be a semi-analytic vector of $A$ and let $\mathrm{d}\mu_{\lambda,k}$ be the scalar spectral measure for $A$, where $0 \le k \le n$ is a multiplicity index for $\sigma(A)$. One has that
\begin{align}
	\sum_{k = 0}^{n}\int_{0}^{\infty} \lambda^{2m} |v(\lambda,k)|^{2} \mathrm{d}\mu_{\lambda,k} &= ||A^{m}v||^{2} < (c_{1}c_{2}^{m}(2m!))^{2} < c_{3}c^{m}(4m)! \\
	\Rightarrow\quad\sum_{m=0}^{\infty}\sum_{k=1}^{n}\int_{0}^{\infty} \lambda^{m/2}{t^{m} \over m!} |v(\lambda,k)| \mathrm{d}\mu_{\lambda,k} &< \infty \quad \text{for} \quad |t| < c^{-1/4},
\end{align}
where $\{c_{j}\}_{j=1}^{3}$ and $c$ are constants. By the Dominated Convergence Theorem one has that $\sum_{k=1}^{n}\int_{0}^{\infty} \exp(x^{1/2}t) |v(\lambda,k)| \mathrm{d}\mu_{\lambda,k} < \infty$ for $|t| < c^{-1/4}$. Since $|\cos y^{1/2}| < \exp |\Im y^{1/2}|$ one has that $\sum_{k=1}^{n}\int_{0}^{\infty} \cos(x^{1/2}t) |v(\lambda,k)| \mathrm{d}\mu_{\lambda,k} < \infty$ for $|\Im t| < c^{-1/4}$ and is therefore analytic in $t$, for $t$ in the strip $|\Im t| < c^{-1/4}$, and is given by the power series $\sum_{n=0}^{\infty} (2n!)^{-1}t^{n} (v,(-A)^{n}v)$ if $|t| < c^{-1/4}$. One in turn has that $(v,\cos(t\widetilde{A}^{1/2})v)$ is specified uniquely by $(v,A^{n}v)$ for real $t$ if $v$ is semi-analytic and $\widetilde{A}$ is a positive self-adjoint extension. If $A$ has a dense set of semi-analytic vectors, then $\cos(t\widetilde{A}^{1/2})$ is uniquely determined independently of the choice of self-adjoint extension. By spectral theorem one has
\begin{align}
	(\widetilde{A}+1)^{-1} = \int_{0}^{\infty} e^{-t} \cos(t\widetilde{A}^{1/2}) \mathrm{d}t
\end{align}
and therefore $\widetilde{A}$ is uniquely determined.

If $v$ is a semi-analytic vector for $A$, then it is also a semi-analytic vector for $A+x$, where $x$ is any positive real number. If $||A^{m}v|| < c_{1}c_{2}^{m}(2m)!$, then one has
\begin{align}
	|| (A + x)^{m}v || &\le \sum_{n=0}^{m}\binom{m}{n}x^{m-n}||A^{n}v|| \le c_{1}(2m)! \sum_{n=0}^{m}\binom{m}{n}x^{m-n}c_{2}^{n} \\
		&\le c_{1}(c_{2}+x)^{m}(2m)!.
\end{align}
Therefore the argument for uniqueness of $\widetilde{A}$ implies that an operator $A$ with a dense set of semi-analytic vectors has at most one extension $\widetilde{A}$ with $\widetilde{A} > -x$.
\end{proof}

\thanks{We thank Marius Beceanu for helpful discussions. This work was supported in part by the NSF grant DMS 1201394.}

\end{document}